\theoremstyle{definition}
\newtheorem{theorem}{\bf{Theorem}}
\newtheorem{lemma}{\bf{Lemma}}
\newtheorem{proposition}{\bf{Proposition}}
\newtheorem{definition}{\bf{Definition}}
\newtheorem{remark}{\bf{Remark}}
\begin{document}

\title{Performance and Construction of Polar Codes: The Perspective of Bit Error Probability}
\author{Bolin~Wu,~\IEEEmembership{Student Member,~IEEE,}
        Kai~Niu,~\IEEEmembership{Member,~IEEE,}
        Jincheng~Dai,~\IEEEmembership{Member,~IEEE}
        \vspace{-1.0em}

}

\maketitle

\begin{abstract}
Most existing works of polar codes focus on the analysis of block error probability. However, in many scenarios, bit error probability is also important for evaluating the performance of channel codes. In this paper, we establish a new framework to analyze the bit error probability of polar codes. Specifically, by revisiting the error event of bit-channel, we first introduce the conditional bit error probability as a metric to evaluate the reliability of bit-channel for both systematic and non-systematic polar codes. Guided by the concept of polar subcode, we then derive an upper bound on the conditional bit error probability of each bit-channel, and accordingly, an upper bound on the bit error probability of polar codes. Based on these, two types of construction metrics aiming at minimizing the bit error probability of polar codes are proposed, which are of linear computational complexity and explicit forms. Simulation results show that the polar codes constructed by the proposed methods can outperform those constructed by the conventional methods.
\end{abstract}

\begin{IEEEkeywords}
Polar codes, code construction, bit error probability, input-output weight enumerating function, union bound.
\end{IEEEkeywords}

\IEEEpeerreviewmaketitle


\section{Introduction}

\IEEEPARstart{P}{olar} codes are the first practical codes that provably achieving the symmetric capacity of binary-input discrete memoryless channels (B-DMCs) asymptotically in the codelength under successive cancellation (SC) decoding \cite{arikan2009channel}. For the finite codelength, enhanced decoding algorithms such as successive cancellation list (SCL) decoding \cite{tal2015list,chen2012list} make polar codes competitive with LDPC and Turbo codes in addition to having low complexity. With the above aspects, polar codes have attracted enormous research interest with respect to both theory and practice.

The idea of polar codes is based on the concept of channel polarization, which transform independent copies of the transmission channel into a set of reliable and unreliable bit-channels. As one of the main focuses of polar coding, polar code construction boils down to efficiently identifying the reliability of each bit-channel and selecting the most reliable ones to transmit information bits, while the remained are fed with foreknown bits.

The existing polar construction methods fall into three main categories. In the first category, the channel conditions are used to evaluate the reliability of bit-channels. Ar{\i}kan \cite{arikan2009channel} first proposed a recursive algorithm based on the Bhattacharyya parameter, which is optimal only for binary erasure channels (BECs). For other B-DMCs, such as the binary symmetric channel (BSC) and binary-input additive white Gaussian noise (BI-AWGN) channel, the complexity of this method grows exponentially in codelength. In \cite{mori2009performance}, Mori and Tanaka proposed using density evolution (DE) to track the intermediate likelihood ratio and calculate the error probability of bit-channel. Though with theoretical guarantees on accuracy, this method poses high memory usages. Later, Tal and Vardy \cite{tal2013construct} used quantization operations to obtain the lower and upper bounds of the error probability of bit-channel, which can achieve good accuracy with sufficient quantization levels. Alternatively, Trifonov \cite{trifonov2012efficient} applied the Gaussian approximation (GA) of DE to further reduce the computational complexity. The aforementioned methods require the channel conditions to be known in advance. Such non-universality together with the high-dimensional calculation from channel stage to decision stage will pose huge challenges in practical application when a large range of code and channel parameters is envisaged.

As a practical alternative, another category of construction methods that utilize the deterministic properties of polar codes was derived. For instance, the polarized weight (PW) \cite{he2017beta} based on the partial order \cite{schurch2016partial} was proposed to sort the reliability of bit-channels by a function of their indices. Following this, the 5G standard uses a construction look-up table named Polar sequence \cite{3gpp.38.212} which sorts 1024 bit-channels in ascending order of reliability. Due to the independence of channel conditions, this kind of construction methods can greatly reduce the computational load but comes at the expense of performance loss.

The third category is based on the weight distributions of polar codes. In \cite{valipour2013probabilistic}, the probabilistic weight distributions of polar codes were proposed, which can be adopted to optimize the bit-channel selection for short polar codes. Apart from this, it was recently found that the error probability of bit-channel under SC decoding is tightly associated with a polar subcode \cite{niu2019polar}. Based on the weight distribution of polar subcode, a theoretical framework was established to analyze and construct polar codes. These methods exploit the algebraic features of polar codes and can be further extended to fading channels \cite{niu2020polar,niu2021polar}.

Most existing construction techniques focus on minimizing the block error probability of polar codes. However, bit error probability is also an important performance criterion in many scenarios such as data channels. And less research has been carried out on the performance analysis and construction of polar codes in terms of the bit error probability. In this paper, we move a step further by studying the bit error probability of polar codes with the concept of polar subcode. The main contributions of this paper can be summarized as follows.
\begin{itemize}
    \item \textbf{Establish a new framework to analyze the bit error probability of polar codes:} By revisiting the error event of bit-channel, we derive the conditional bit error probability of bit-channel to analyze the bit error probability of polar codes under SC decoding. Guided by the concept of polar subcode, the input-output weight enumerating function (IOWEF) for both systematic and non-systematic codings are introduced, with which the conditional bit error probability of bit-channel and the bit error probability of polar codes can be further bounded by the union bound. The proposed bounds have a closed-form and serve as a useful theoretical tool for accessing the bit error performance and gaining insight into the effect of constructive features of polar codes.
    \item \textbf{Design recursive algorithms to calculate the IOWEF of polar subcode:} For polar codes with small codelength, the IOWEF of polar subcode can be easily enumerated. While for the other cases, the enumeration can be a formidable problem. To tackle this issue, we exploit the Plotkin and symmetric structures of polar codes and then design recursive algorithms to calculate the IOWEF of polar subcode for both systematic and non-systematic polar codes.
    \item \textbf{Propose two types of polar code construction metrics:} Based on the conditional bit error probability upper bound of bit-channel, we propose two types of construction metrics named union-Bhattacharyya bound weight of the bit error probability (UBWB) and simplified UBWB (SUBWB), which aim at minimizing the bit error probability of polar codes. The UBWB and SUBWB have explicit forms and linear computational complexity since the IOWEF of polar subcode can be precomputed and stored. Simulation results show that the polar codes constructed by the proposed metrics can offer superior performance to those constructed by conventional methods.
\end{itemize}

The rest of the paper is organized as follows. In Section II, we present the preliminaries of polar codes, including polar coding, decoding, and polar subcode. Section III introduces the general framework for analyzing the bit error probability of polar codes. Following this, the detailed analysis of systematic and non-systematic polar codes are derived in Sections IV and V, respectively. Two types of construction metrics are introduced in Section VI. Section VII offers the numerical analysis and simulation results. Finally, we conclude our paper in Section VIII.

\emph{Notation Conventions:} In this paper, sets are denoted by the calligraphy letters, such as $\mathcal X$, and its cardinality is denoted as $\left| \mathcal X \right|$. Especially, we may also use hollow symbols (e.g., $\mathbb D$) to denote codeword sets. We use lowercase letters (e.g. $x$) to denote scalars. Notation $v_i^j$ denotes the vector $\left(v_i,v_{i+1},\cdots,v_{j-1},v_j\right)$ and $v_{\mathcal X}$ represents a vector with elements $v_i, i \in \mathcal X$. We may occasionally use the boldface lowercase letters (e.g. $\mathbf{v}$) to denote vectors. The boldface capital letters (e.g. $\mathbf{A}$) are used to denote matrices. The element in the $i$-th row and the $j$-th column of the matrix $\mathbf{A}$ is written as $\mathbf{A}_{i,j}$, and $\mathbf{A}_{\mathcal X,\mathcal Y}$ indicates the submatrix of $\mathbf{A}$ with rows from set $\mathcal X$ and columns from set $\mathcal Y$. The Hamming weight of a codeword $\mathbf{c}$ is denoted as $wt\left( \mathbf{c} \right)$, and $d_H\left( \mathbf{c}_1,\mathbf{c}_2 \right)$ means the Hamming distance between $\mathbf{c}_1$ and $\mathbf{c}_2$. Clearly, $wt\left( \mathbf{c} \right) =d_H\left( \mathbf{0},\mathbf{c} \right)$.

\section{Preliminaries}

\subsection{Polar Codes}
Given the codelength $N=2^n$, polar codes are performed by
\begin{equation}
    \label{x = uG}
    x_1^N = u_1^N{\bf{F}}_2^{ \otimes n}{{\bf{B}}_N},
\end{equation}
where $x_1^N$ is the codeword, $u_1^N$ is the source word, ${\mathbf{F}}_2^{ \otimes n}$ denotes the $n$-th Kronecker power of ${{\bf{F}}_2} = \left[ { \begin{smallmatrix} 1 & 0 \\ 1 &  1 \end{smallmatrix} } \right]$ and ${{\bf{B}}_N}$ is the bit-reversal operation matrix. Since ${{\mathbf{B}}_N}$ is only a simple permutation on $x_1^N$, we will use ${{\bf{G}}_N} = {\bf{F}}_2^{ \otimes n}$ as the generator matrix in this paper, which is also adopted in the 5G standard \cite{3gpp.38.212}. Let $W:{\mathcal X} \to {\mathcal Y}$ denote a B-DMC with input alphabet $\mathcal{X}$ and output alphabet $\mathcal{Y}$, the channel transition probabilities are given by $W\left( y|x \right)$. The well-known channel polarization phenomenon is based on channel combining and splitting, which transforms $N$ copies of $W$ into a synthetic channel ${W_N}$ by
\begin{equation}
    W_N\left( y_{1}^{N}|u_{1}^{N} \right) =W^N\left( y_{1}^{N}|u_{1}^{N}\mathbf{G}_N \right) =W^N\left( y_{1}^{N}|x_{1}^{N} \right) =\prod_{i=1}^N{W\left( y_i|x_i \right)}.
\end{equation}
The synthetic channel ${W_N}$ are splitted into $N$ bit-channels $W_N^{\left( i \right)}$ with transition probabilities
\begin{equation}
    W_{N}^{\left( i \right)}\left( y_{1}^{N},u_{1}^{i-1}|u_i \right) =\sum_{u_{i+1}^{N}}{\frac{1}{2^{N-1}}W_N\left( y_{1}^{N}|u_{1}^{N} \right)},
\end{equation}
and they show a polarization effect in reliability. Given the code dimension $K$, the information bits $\mathbf{b}$ are transmitted over the $K$ most reliable bit-channels, collectively referred to as the information set $\mathcal{A}$. Its complementary set $\mathcal {A}^{c}$ denotes the frozen set and the corresponding frozen bits are set to fixed values known by both encoder and decoder. Note that for symmetric channels, the performance of polar codes is independent of the choice of frozen bits \cite{arikan2009channel}. For convenience, we employ all-zero frozen bits in the following analysis.

Having received the channel output $y_1^N$, a successive cancellation (SC) decoding can be adopted to decode polar codes. The estimate $\hat u_1^N$ of $u_1^N$ is determined in serial order from index 1 to $N$. However, the performance under SC decoding is not competitive for finite code length. Several enhanced decoding algorithms are proposed to improve the coding performance, such as successive cancellation list (SCL) decoding, successive cancellation stack (SCS) decoding, etc.

For systematic polar codes, the information bits are encoded into a codeword in such a way that they appear transparently as part of codeword. As shown in \cite{arikan2011systematic}, the index set of information bits in a codeword $x_{1}^{N}$ can be chosen equal to the information set $\mathcal{A}$. Hence, a codeword can be split into two parts and rewritten as $x_{1}^{N}=\left( x_{\mathcal{A}},x_{\mathcal{A} ^c} \right)$, where ${x_{{\mathcal A}}}$ and ${x_{{{{\mathcal A}}^c}}}$ denote the information bits and parity bits, respectively. Ar{\i}kan shows that the systematic variant of polar coding preserves the BLER performance and meanwhile outperforms in terms of BER performance.

\subsection{Subcode and Polar subcode}
The block error probability under SC decoding is upper bounded by the sum of the error probabilities over the bit-channels $\left\{ {W_N^{\left( i \right)}} \right\}$ corresponding to the information set. In the previous work \cite{niu2019polar}, it was found that the error probability of bit-channel $W_N^{\left( i \right)}$ with codelength $N$ under SC decoding is tightly associated with a subcode $\mathbb C_N^{\left( i \right)}$ and the corresponding polar subcode $\mathbb D_N^{\left( i \right)}$, which are defined respectively as follows
\begin{definition}\label{def_WEF_polar_subcode}
Given the codelength $N$, the subcode $\mathbb C_N^{\left( i \right)}$ and polar subcode $\mathbb D_N^{\left( i \right)}$ of the $i$-th bit-channel are definied by
\begin{equation}\label{def_subcode}
    \mathbb C_N^{\left( i \right)} \triangleq \left\{ {{\bf{c}}:{\bf{c}} = \left( {0_1^{i - 1},u_i^N} \right){{\bf{G}}_N},\forall u_i^N \in {{\cal X}^{N - i + 1}}} \right\},
\end{equation}
\begin{equation}\label{def_polar_subcode}
    \mathbb D_N^{\left( i \right)} \triangleq \left\{ {{{{\bf{c}}^{\left( 1 \right)}}}:{{\bf{c}}^{\left( 1 \right)}} = \left( {0_1^{i - 1},1, u_{i+1}^N} \right){{\bf{G}}_N},\forall u_{i+1}^N \in {{\cal X}^{N - i}}} \right\}.
\end{equation}
\end{definition}

Obviously, $\mathbb C_N^{\left( i \right)}$ is an $\left( {N,N - i + 1} \right)$ linear block code and its (output) weight enumerating function (WEF) is denoted as
\begin{equation}
    S_N^{\left( i \right)}\left( Z \right) = \sum\limits_d {S_N^{\left( i \right)}\left( d \right){Z^d}},
\end{equation}
where $S_N^{\left( i \right)}\left( d \right)$ is the number of codewords with output weight $d$ in $\mathbb C_N^{\left( i \right)}$. Similarly, define the polar WEF of $\mathbb D_N^{\left( i \right)}$ (also called polar spectrum) by the following polynomial with $A_N^{\left( i \right)}\left( d \right)$ being the number of codewords with output weight $d$ in $\mathbb D_N^{\left( i \right)}$.
\begin{equation}\label{def_polar_WEF}
    A_N^{\left( i \right)}\left( Z \right) = \sum\limits_d {A_N^{\left( i \right)}\left( d \right){Z^d}},
\end{equation}

In what follows, we will use $\{ {S_N^{\left( i \right)}\left( d \right)} \}$ and $\{ {A_N^{\left( i \right)}\left( d \right)} \}$ as shorthands for WEF of $\mathbb C_N^{\left( i \right)}$ and polar spectrum of $\mathbb D_N^{\left( i \right)}$, respectively. For codes of small codelengths, $\{ {S_N^{\left( i \right)}\left( d \right)} \}$ and $\{ {A_N^{\left( i \right)}\left( d \right)} \}$ can be enumerated easily. While for the other cases, they can be calculated by a recursive algorithm proposed in \cite{niu2019polar}.

Let $\mathcal{E} _i=\left\{ \left( u_{1}^{N},y_{1}^{N} \right) :W_{N}^{\left( i \right)}\left( y_{1}^{N},u_{1}^{i-1}|u_i \right) \leqslant W_{N}^{\left( i \right)}\left( y_{1}^{N},u_{1}^{i-1}|u_i\oplus 1 \right) \right\}$ denote the error event of the $i$-th bit-channel \cite{arikan2009channel}. By invoking the union bound \cite{niu2019polar}, the error probability of $W_N^{\left( i \right)}$ can be upper bounded by
\begin{equation}\label{Error_Probability_BitChannel_d}
    {P_e}\left( {W_N^{\left( i \right)}} \right) = P_e\left( \mathcal{E} _i \right) \le \sum\limits_{d = 1}^N {A_N^{\left( i \right)}\left( d \right)P_N^{\left( i \right)}\left( d \right)},
\end{equation}
where $P_N^{\left( i \right)}\left( d \right)$ is the pairwise error probability between the all-zero codeword and the codeword with weight $d$ in $\mathbb D_N^{\left( i \right)}$.

Furthermore, let $\mathcal{E} =\left\{ \left( u_{1}^{N},y_{1}^{N} \right) \in \mathcal{X} ^N\times \mathcal{Y} ^N:u_{\mathcal{A}}\ne \hat{u}_{\mathcal{A}} \right\}$ denote the block error event under SC decoding. Since SC decoding performs in a serial manner from the first bit-channel to the last, given the information set $\mathcal{A}$, then $\mathcal{E} \subset \bigcup\nolimits_{i\in \mathcal{A}}^{}{\mathcal{E} _i}$ and the block error probability of polar codes is upper bounded by
\begin{equation} \label{BLER_union_bound_of_SC}
    P_e\left( N,K,\mathcal{A} \right) =P_e\left( \mathcal{E} \right) \le \sum\limits_{i \in {\cal A}} {{P_e}\left( {W_N^{\left( i \right)}} \right)}  \le \sum\limits_{i \in {\cal A}} {\sum\limits_{d = 1}^N {A_N^{\left( i \right)}\left( d \right)P_N^{\left( i \right)}\left( d \right)}}.
\end{equation}

The above bounds have explicit expressions which reflect the underlying relation between the error performance of polar codes and the algebraic features of polar subcodes.

\section{Conditional Bit Error Bit Probability}\label{BER_total}

The main concern of interest in the following analysis is how to evaluate the bit error probability of polar codes. To tackle this issue, we introduce the conditional bit error probability of bit-channel which is defined as follows.

\begin{definition}
Given the polar code parameter $\left( N,K,\mathcal{A} \right)$, the conditional bit error probability of the $i$-th bit-channel under SC decoding is defined by
\begin{equation}
   P_{b}^{\mathcal{A}}\left( W_{N}^{\left( i \right)} \right) =\frac{1}{K}E\left[ d_H(b_{1}^{K},\hat{b}_{1}^{K})|\mathcal{E} _i \right] P_e\left( W_{N}^{\left( i \right)} \right),
\end{equation}
where $b_{1}^{K}$ is the information bits and $\hat{b}_{1}^{K}$ is its estimate, $E\left[ d_H(b_{1}^{K},\hat{b}_{1}^{K})|\mathcal{E} _i \right]$ is the conditional expectation of $d_H( b_{1}^{K},\hat{b}_{1}^{K} )$ given the event $\mathcal{E} _i$.
\end{definition}


\begin{theorem}\label{BER_SC_polar}
For a polar code with parameter $\left( N,K,\mathcal{A} \right)$, the bit error probability of the polar code under SC decoding is upper bounded by
\begin{equation}
    P_b\left( N,K,\mathcal{A} \right) \leqslant \sum_{i\in \mathcal{A}}{P_{b}^{\mathcal{A}}\left( W_{N}^{\left( i \right)} \right)}.
\end{equation}
\end{theorem}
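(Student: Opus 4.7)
The plan is to unwind the definition of the bit error probability $P_b(N,K,\mathcal{A})$ as an expected Hamming distance and then peel it onto the union-bound decomposition $\mathcal{E}\subset\bigcup_{i\in\mathcal{A}}\mathcal{E}_i$ that is already stated in the preliminaries just above the theorem. Concretely, I would start from
\begin{equation*}
P_b(N,K,\mathcal{A})=\frac{1}{K}\,E\!\left[d_H(b_1^K,\hat b_1^K)\right],
\end{equation*}
which is the natural meaning of bit error probability consistent with the conditional definition just given for a single bit-channel.

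Next I would observe that whenever the block error event $\mathcal{E}=\{u_{\mathcal{A}}\neq\hat u_{\mathcal{A}}\}$ does not occur, the information bits are decoded correctly and $d_H(b_1^K,\hat b_1^K)=0$. Therefore the contribution to the expectation only comes from the event $\mathcal{E}$, giving
\begin{equation*}
P_b(N,K,\mathcal{A})=\frac{1}{K}\,E\!\left[d_H(b_1^K,\hat b_1^K)\,\mathbbm{1}_{\mathcal{E}}\right].
\end{equation*}
The key step is then the indicator inequality $\mathbbm{1}_{\mathcal{E}}\le\sum_{i\in\mathcal{A}}\mathbbm{1}_{\mathcal{E}_i}$, which is a pointwise consequence of the set inclusion $\mathcal{E}\subset\bigcup_{i\in\mathcal{A}}\mathcal{E}_i$ recorded just before \eqref{BLER_union_bound_of_SC}; since $d_H(\cdot,\cdot)\ge 0$, multiplying by this nonnegative quantity and applying linearity of expectation yields
\begin{equation*}
P_b(N,K,\mathcal{A})\le\frac{1}{K}\sum_{i\in\mathcal{A}}E\!\left[d_H(b_1^K,\hat b_1^K)\,\mathbbm{1}_{\mathcal{E}_i}\right].
\end{equation*}

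Finally I would rewrite each summand using the conditional expectation identity $E[X\mathbbm{1}_{\mathcal{E}_i}]=E[X|\mathcal{E}_i]\,P_e(\mathcal{E}_i)$ (with the convention that the term vanishes when $P_e(\mathcal{E}_i)=0$) to match exactly the definition of $P_b^{\mathcal{A}}(W_N^{(i)})$, giving
\begin{equation*}
P_b(N,K,\mathcal{A})\le\sum_{i\in\mathcal{A}}\frac{1}{K}E\!\left[d_H(b_1^K,\hat b_1^K)\,\big|\,\mathcal{E}_i\right]P_e(\mathcal{E}_i)=\sum_{i\in\mathcal{A}}P_b^{\mathcal{A}}(W_N^{(i)}),
\end{equation*}
which is the claimed bound. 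There is no deep obstacle here; the only point that deserves care is the indicator-function union bound, because the events $\mathcal{E}_i$ are in general overlapping, so the same block error may be charged to several bit-channels. Since we are only aiming at an upper bound this overcounting is harmless, and it is in fact precisely what lets the conditional-bit-error-per-channel quantities $P_b^{\mathcal{A}}(W_N^{(i)})$ aggregate additively.
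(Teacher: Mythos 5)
Your proof is correct and is essentially the paper's argument: both reduce the claim to the pointwise domination of the block-error contribution by the sum over $i\in\mathcal{A}$ of the bit-channel error events, followed by the identity $E\bigl[X\,\mathbbm{1}_{\mathcal{E}_i}\bigr]=E\left[X\mid\mathcal{E}_i\right]P_e\left(\mathcal{E}_i\right)$ to recover the definition of $P_b^{\mathcal{A}}\bigl(W_N^{(i)}\bigr)$. The only cosmetic difference is that the paper first passes through the disjoint first-error events $\mathcal{B}_i$ to obtain the exact intermediate identity $P_b=\sum_{i\in\mathcal{A}}\frac{1}{K}E\left[d_H\mid\mathcal{B}_i\right]P_e\left(\mathcal{B}_i\right)$ (which it reuses later for non-systematic codes) and then relaxes via $\mathcal{B}_i\subset\mathcal{E}_i$, whereas you apply the indicator union bound to the stated inclusion $\mathcal{E}\subset\bigcup_{i\in\mathcal{A}}\mathcal{E}_i$ directly.
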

\begin{proof}
Based on the successive manner of SC decoding, any $u_i\ne \hat{u}_i$ will commit an erroneous estimate $\hat{b}_{1}^{K}$. Hence $u_{\mathcal{A}}\ne \hat{u}_{\mathcal{A}}$ is equivalent to $x_{\mathcal{A}}\ne \hat{x}_{\mathcal{A}}$ and this also validates the reason why the block error probability of systematic and non-systematic polar codes are the same. According to \cite[Sec. V]{arikan2009channel}, the block error event $\mathcal{E}$ can be expressed as $\mathcal{E} =\bigcup\nolimits_{i\in \mathcal{A}}^{}{\mathcal{B} _i}$, where $\mathcal{B} _i\triangleq \left\{ \left( u_{1}^{N},y_{1}^{N} \right) \in \mathcal{X} ^N\times \mathcal{Y} ^N:u_{1}^{i-1}=\hat{u}_{1}^{i-1},u_i=h_i\left( y_{1}^{N},\hat{u}_{1}^{i-1} \right) \right\}$ is the event that the first decision error under SC decoding occurs at the $i$-th bit-channel with $h_i\left( y_{1}^{N},\hat{u}_{1}^{i-1} \right)$ denotes the decision function. Thus, we have
\begin{equation}\label{BLER_equality}
    P_e\left( N,K,\mathcal{A} \right) =P_e\left( \mathcal{E} \right) =\sum_{i\in \mathcal{A}}{P_e\left( \mathcal{B} _i \right)}.
\end{equation}

Note that for each $\mathcal{B} _i$, the corresponding bit error probability is $\frac{1}{K}E\left[ d_H( b_{1}^{K},\hat{b}_{1}^{K} ) |\mathcal{B} _i \right] P_e\left( \mathcal{B} _i \right) $, and this together with (\ref{BLER_equality}) implies that the bit error probability of polar codes with parameter $\left( N,K,\mathcal{A} \right)$ under SC decoding is
\begin{equation}\label{BER_equality}
P_b\left( N,K,\mathcal{A} \right) =\sum_{i\in \mathcal{A}}{\frac{1}{K}E\left[ d_H( b_{1}^{K},\hat{b}_{1}^{K} ) |\mathcal{B} _i \right] P_e\left( \mathcal{B} _i \right)}.
\end{equation}
Since $\mathcal{B} _i \subset \mathcal{E} _i$, and
\begin{equation}
    E\left[ d_H(b_{1}^{K},\hat{b}_{1}^{K})|\mathcal{B} _i \right] P_e\left( \mathcal{B} _i \right) =\hspace{-0.5em}\sum_{d_H(b_{1}^{K},\hat{b}_{1}^{K})}{\hspace{-0.4em}}d_H(b_{1}^{K},\hat{b}_{1}^{K})P\left( \mathcal{B} _i|d_H(b_{1}^{K},\hat{b}_{1}^{K}) \right) P\left( d_H(b_{1}^{K},\hat{b}_{1}^{K}) \right),
\end{equation}
it follows that
\begin{equation}\label{Bi_Ei_inequality}
E\left[ d_H(b_{1}^{K},\hat{b}_{1}^{K})|\mathcal{B} _i \right] P_e\left( \mathcal{B} _i \right) \leqslant E\left[ d_H(b_{1}^{K},\hat{b}_{1}^{K})|\mathcal{E} _i \right] P_e\left( W_{N}^{\left( i \right)} \right).
\end{equation}
Substituting (\ref{Bi_Ei_inequality}) into (\ref{BER_equality}), we obtain
\begin{equation}
    P_b\left( N,K,\mathcal{A} \right) \leqslant \sum_{i\in \mathcal{A}}{\frac{1}{K}E\left[ d_H(b_{1}^{K},\hat{b}_{1}^{K})|\mathcal{E} _i \right] P_e\left( W_{N}^{\left( i \right)} \right)}=\sum_{i\in \mathcal{A}}{P_{b}^{\mathcal{A}}\left( W_{N}^{\left( i \right)} \right)}.
\end{equation}
This completes the proof.
\end{proof}

An intuitive explanation of Theorem \ref{BER_SC_polar} is that the final decision $\hat{b}_{1}^{K}$ is error-free only if all the bit-channels with indices in $\mathcal{A}$ are decoded correctly. Together with the bit error probability caused by the decoding error of each bit-channel, the probability of its complementary event thus gives an upper bound on the bit error probability of polar codes under SC decoding, which is denoted by the sum of conditional bit error probability of bit-channels. In the following sections, we will give a detailed analysis for both systematic and non-systematic polar codes, and investigate the relations between conditional bit error probability and polar subcode.

\section{Bit Error Probability Analysis of Systematic Polar Codes}\label{BER_SPC}

In this section, we focus on the analysis of systematic polar codes. First, we introduce the IOWEFs of the subcode and polar subcode for systematic coding. Second, we investigate the homogeneous property of polar subcode. Based on these, we then derive an upper bound on the conditional bit error probability of systematic polar codes. Furthermore, a recursive algorithm is proposed to calculate the IOWEF of polar subcode.

\subsection{Conditional Bit Error Probability of Bit-channel for Systematic Polar Codes}

Exploiting the concept of polar subcode and its WEF, the error probability of bit-channel can be expressed in the form of the union bound as \eqref{Error_Probability_BitChannel_d}. Recall that the bit error probability is another commonly used performance measure of which the upper bound is expressible in terms of the IOWEFs of the codes \cite{ryan2009channel}. Regarding this, we introduce the following definition.

\begin{definition}
The IOWEF and polar IOWEF of the $i$-th bit-channel are defined respectively by
\begin{equation}
    S_{N}^{\left( i \right)}\left( W,Z \right) =\sum_w{\sum_d{S_{N}^{\left( i \right)}\left( w,d \right) W^wZ^d}},
\end{equation}
\begin{equation}\label{def_polar_IOWEF_SPC}
    A_{N}^{\left( i \right)}\left( W,Z \right) =\sum_w{\sum_d{A_{N}^{\left( i \right)}\left( w,d \right) W^wZ^d}},
\end{equation}
where $S_N^{\left( i \right)}\left( {w,d} \right)$ denotes the number of codewords with information (input) weight $w$ and codeword (output) weight $d$ in $\mathbb C_N^{\left( i \right)}$, and $A_N^{\left( i \right)}\left( {w,d} \right)$ holds a similar definition for $\mathbb D_N^{\left( i \right)}$.
\end{definition}

 Regarding the definition in \eqref{def_subcode}, the subcode $\mathbb C_N^{\left( i \right)}$ is actually an $\left( {N,N - i + 1} \right)$ polar code with information set $\mathcal{A} = \{ i,i + 1, \cdots ,N\}$. Hence for systematic coding, we denote the input weight of codeword $\bf{c}$ in $\mathbb C_N^{\left( i \right)}$ as the Hamming weight of the last $N-i+1$ codeword bits, i.e., $w = {wt}\left( {{\bf{c}}_i^N} \right)$. And so is the definition for $\mathbb D_N^{\left( i \right)}$, since the polar subcode is a subset of subcode. Convenient shorthands for IOWEF of $\mathbb{C} _{N}^{\left( i \right)}$ and polar IOWEF of $\mathbb{D} _{N}^{\left( i \right)}$ are $\left\{ S_{N}^{\left( i \right)}\left( w,d \right) \right\}$ and $\left\{ A_{N}^{\left( i \right)}\left( w,d \right) \right\}$, respectively. The subcode and polar subcode have the following properties.

\begin{proposition}\label{subcode_duality}
The subcode $\mathbb C_N^{\left( {N + 2 - i} \right)}$ is the dual code of $\mathbb C_N^{\left( {i} \right)}$, where $N/2 + 1 \le i \le N$. In other words, $\mathbb C_N^{\left( {N + 2 - i} \right)} = \mathbb C_N^{\left( i \right) \bot }$. Especially, $\mathbb C_N^{\left( {N/2 + 1} \right)}$ is a self-dual code.
\end{proposition}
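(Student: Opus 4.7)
The plan is to prove this by combining a direct orthogonality computation on $\mathbf{G}_N$ with a dimension count. Since $\mathbb C_N^{(i)}$ is spanned by rows $i, i+1, \ldots, N$ of $\mathbf{G}_N$ and $\mathbb C_N^{(N+2-i)}$ is spanned by rows $N+2-i, N+3-i, \ldots, N$ of $\mathbf{G}_N$, their dimensions are $N-i+1$ and $i-1$, which already sum to $N$. Therefore, once I show that the two generating sets are mutually orthogonal over $\mathbb{F}_2$, the identity $\mathbb C_N^{(N+2-i)} = \mathbb C_N^{(i)\bot}$ follows automatically, and the self-dual case $i = N/2+1$ is then immediate because then $N+2-i = i$.

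The core step is the identity $\mathbf{G}_N \mathbf{G}_N^T = (\mathbf{F}_2 \mathbf{F}_2^T)^{\otimes n}$, obtained from $\mathbf{G}_N = \mathbf{F}_2^{\otimes n}$ together with the mixed-product property of Kronecker products and the fact that transpose distributes over the Kronecker product. A direct calculation over $\mathbb{F}_2$ gives $\mathbf{F}_2 \mathbf{F}_2^T = \bigl(\begin{smallmatrix} 1 & 1 \\ 1 & 0 \end{smallmatrix}\bigr)$, whose lower-right entry vanishes. I would then establish the key combinatorial claim: $(\mathbf{G}_N \mathbf{G}_N^T)_{j,k} = 0$ whenever $j+k \ge N+2$. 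Writing $j-1$ and $k-1$ in $n$-bit binary form, the Kronecker identity expresses the $(j,k)$-entry as the product of $n$ entries of $\mathbf{F}_2 \mathbf{F}_2^T$ indexed bit by bit, and this product is nonzero iff no bit position has both $j_l = 1$ and $k_l = 1$, i.e., the bitwise AND of $j-1$ and $k-1$ is zero. When that holds, $(j-1)+(k-1)$ coincides with the bitwise OR of the two indices, which is at most $N-1$; hence $j+k \le N+1$, contrary to hypothesis.

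Applying the claim to any $j \ge i$ and $k \ge N+2-i$ gives $j+k \ge N+2$ and hence $(\mathbf{G}_N \mathbf{G}_N^T)_{j,k} = 0$, which is exactly the orthogonality of the two generating sets of rows. The dimension count above then upgrades this containment to the claimed equality $\mathbb C_N^{(N+2-i)} = \mathbb C_N^{(i)\bot}$, and the self-duality of $\mathbb C_N^{(N/2+1)}$ follows by substitution.

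I expect the only real obstacle to be bookkeeping: keeping 1-indexed row labels consistent with 0-indexed binary expansions, and stating the entrywise Kronecker formula precisely. The one nontrivial substantive observation is that disjointness of binary supports forces the sum of the indices to equal their bitwise OR, which is automatically bounded by $N-1$; everything else in the argument is routine linear algebra.
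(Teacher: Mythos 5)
Your proof is correct. Note that the paper does not actually prove Proposition~\ref{subcode_duality} itself --- it defers to \cite{niu2019polar}, where the duality is obtained by induction on $n$ through the Plotkin $\left[ \mathbf{u}+\mathbf{v}\,|\,\mathbf{v} \right]$ recursion (the same style of argument this paper uses in Appendix~A for the cyclicity claim). Your route is genuinely different and arguably cleaner: the identity $\mathbf{G}_N\mathbf{G}_N^{T}=(\mathbf{F}_2\mathbf{F}_2^{T})^{\otimes n}$ with $\mathbf{F}_2\mathbf{F}_2^{T}=\bigl(\begin{smallmatrix}1&1\\1&0\end{smallmatrix}\bigr)$ over $\mathbb{F}_2$ gives the exact orthogonality criterion (rows $j$ and $k$ of $\mathbf{G}_N$ are orthogonal iff the binary supports of $j-1$ and $k-1$ intersect), and your observation that disjoint supports force $(j-1)+(k-1)=(j-1)\,\mathrm{OR}\,(k-1)\le N-1$ correctly converts this into the index condition $j+k\ge N+2$, which holds for all generators of the two subcodes; the dimension count $(N-i+1)+(i-1)=N$ then upgrades the inclusion to equality, and self-duality at $i=N/2+1$ is immediate. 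All the steps check out (the mixed-product property, the entrywise Kronecker formula, and the invertibility of $\mathbf{G}_N$ giving the stated dimensions), so the only thing you gain from the inductive route is uniformity with the other recursive proofs in the paper; what your argument buys is a self-contained, non-recursive proof that also yields the stronger pairwise characterization of which rows of $\mathbf{G}_N$ are mutually orthogonal.
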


\begin{proposition}\label{cyclic_code}
The subcode $\mathbb C_N^{\left( {i} \right)}$ and polar subcode $\mathbb D_N^{\left( {i} \right)}$ are both cyclic codes, where $1 \le i \le N$.
\end{proposition}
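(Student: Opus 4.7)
My approach is to identify each length-$N$ binary codeword with a polynomial in $\mathbb{F}_2[X]/(X^N+1)$ and show that the $i$-th row of $\mathbf{G}_N=\mathbf{F}_2^{\otimes n}$ corresponds to the polynomial $(1+X)^{i-1}$. Since $N=2^n$ and the characteristic is $2$, we have $X^N+1=(1+X)^N$, so the quotient ring is a chain ring whose only ideals are $\langle(1+X)^k\rangle$ for $k=0,\ldots,N$, with dimension $N-k$. Once the row identification is established, $\mathbb{C}_N^{(i)}$, being the row span of rows $i,i+1,\ldots,N$, coincides with the principal ideal $\langle(1+X)^{i-1}\rangle$, which is a cyclic code by construction.

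I would verify the row identity by induction on $n$ using the Kronecker recursion $\mathbf{G}_{2N}=\mathbf{F}_2\otimes\mathbf{G}_N$: row $j$ of $\mathbf{G}_N$ padded with zeros becomes row $j$ of $\mathbf{G}_{2N}$, while the concatenation $(\mathrm{row}_j\mathbf{G}_N,\,\mathrm{row}_j\mathbf{G}_N)$ becomes row $N+j$. In polynomial language the former map is the identity, while the latter is multiplication by $1+X^N$. Using the Freshman's Dream $(1+X)^{2^n}=1+X^{2^n}$ in $\mathbb{F}_2[X]$, the inductive hypothesis $(1+X)^{j-1}$ for row $j$ of $\mathbf{G}_N$ produces $(1+X)^N(1+X)^{j-1}=(1+X)^{N+j-1}$ for row $N+j$ of $\mathbf{G}_{2N}$, which closes the induction.

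With the polynomial form in hand, cyclicity of $\mathbb{C}_N^{(i)}$ is immediate. For $\mathbb{D}_N^{(i)}=(1+X)^{i-1}+\mathbb{C}_N^{(i+1)}$, which is the coset encoding $u_i=1$ rather than a linear subspace, I would check shift-invariance directly: for any $\mathbf{c}=(1+X)^{i-1}+\mathbf{c}'$ with $\mathbf{c}'\in\mathbb{C}_N^{(i+1)}$,
\begin{equation*}
X\mathbf{c}=X(1+X)^{i-1}+X\mathbf{c}'=(1+X)^{i-1}+(1+X)^i+X\mathbf{c}',
\end{equation*}
and the last two summands lie in $\mathbb{C}_N^{(i+1)}$ — the first as row $i+1$ of $\mathbf{G}_N$ and the second by the cyclicity of $\mathbb{C}_N^{(i+1)}$ just proved — so $X\mathbf{c}\in\mathbb{D}_N^{(i)}$. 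The boundary case $i=N$ reduces to $X(1+X)^{N-1}=(1+X)^{N-1}+(1+X)^N\equiv(1+X)^{N-1}$ modulo $(1+X)^N$. The main obstacle I anticipate is the inductive step that ties the Kronecker block structure of $\mathbf{G}_N$ to polynomial multiplication by $(1+X)^N$ in the quotient ring; once that bridge is built, the cyclicity statements fall out of elementary ideal arithmetic in the chain ring.
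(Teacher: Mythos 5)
Your proof is correct, and it takes a genuinely different route from the paper's. The paper argues by induction on the codelength: for indices $N+1\le i\le 2N$ it uses the Plotkin repetition structure $\mathbf{r}=(\mathbf{t},\mathbf{t})$, for $2\le i\le N$ it invokes the duality $\mathbb{C}_{2N}^{(2N+2-i)}=\mathbb{C}_{2N}^{(i)\perp}$ (Proposition \ref{subcode_duality}) together with the classical fact that the dual of a cyclic code is cyclic, and it handles $\mathbb{D}_N^{(i)}$ by essentially the same coset argument you give at the end (there phrased as a contradiction: a shift landing in $\mathbb{C}_N^{(i+1)}$ would force the original codeword into $\mathbb{C}_N^{(i+1)}$). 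Your approach instead pins down the algebra completely: identifying row $i$ of $\mathbf{G}_N$ with $(1+X)^{i-1}$ exhibits $\mathbb{C}_N^{(i)}$ as the ideal $\langle(1+X)^{i-1}\rangle$ of the chain ring $\mathbb{F}_2[X]/\bigl((1+X)^N\bigr)$, so cyclicity is immediate, and you get strictly more: an explicit generator polynomial, the fact that the $\mathbb{C}_N^{(i)}$ exhaust the cyclic codes of length $2^n$ over $\mathbb{F}_2$, and the paper's duality proposition as a free corollary (since $1+X$ is self-reciprocal, the dual of $\langle(1+X)^{i-1}\rangle$ is $\langle(1+X)^{N-i+1}\rangle$). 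The inductive bridge you flag as the main obstacle is in fact the easy part: $\mathbf{F}_2\otimes\mathbf{G}_N$ sends row $j$ to $p_j(X)$ and row $N+j$ to $(1+X^N)p_j(X)=(1+X)^Np_j(X)$, exactly as you describe, and your coset computation for $\mathbb{D}_N^{(i)}$, including the boundary case $i=N$, is sound. The one step stated without justification is that the span of rows $i,\dots,N$ equals the full ideal rather than merely sitting inside it; a single line closes it, since $(1+X)^{i-1},\dots,(1+X)^{N-1}$ form an $\mathbb{F}_2$-basis of $\langle(1+X)^{i-1}\rangle$ and are precisely those rows.
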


The proofs of Proposition \ref{subcode_duality} and Proposition \ref{cyclic_code} are given in \cite{niu2019polar} and Appendix \ref{proof_of_cyclic_code}, respectively. Let $\mathbf{M}$ be the ${2^{N - i}} \times N$ matrix whose rows are all codewords in $\mathbb D_N^{\left( i \right)}$, and let $\mathbf{M}_d$ be the $A_{N}^{\left( i \right)}\left( d \right) \times N$ submatrix of $\mathbf{M}$ consisting of the codewords of weight $d$.

\begin{proposition}\label{homogeneous}
The polar subcode $\mathbb D_N^{\left( i \right)}$ is homogeneous \cite{huffman2010fundamentals}, which means for $0 \le d \le N$, each column of $\mathbf{M}_d$ has the same Hamming weight.
\end{proposition}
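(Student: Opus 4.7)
The plan is to derive homogeneity directly from the cyclic property of $\mathbb{D}_N^{(i)}$ established in Proposition \ref{cyclic_code}. Let $\pi$ denote the cyclic shift on length-$N$ vectors, i.e., $\pi(c_1,\ldots,c_N)=(c_N,c_1,\ldots,c_{N-1})$. By Proposition \ref{cyclic_code}, $\pi$ maps $\mathbb{D}_N^{(i)}$ into itself, and since $\mathbb{D}_N^{(i)}$ is finite, this restriction is a bijection of $\mathbb{D}_N^{(i)}$ onto itself. Because $\pi$ preserves Hamming weight, its action further restricts to a bijection on the weight-$d$ subset, namely the row set of $\mathbf{M}_d$.

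From here the conclusion is one book-keeping step. Let $w_j$ be the Hamming weight of the $j$-th column of $\mathbf{M}_d$. Applying $\pi$ to every row of $\mathbf{M}_d$ produces a matrix $\mathbf{M}_d'$ whose rows are a permutation of the rows of $\mathbf{M}_d$ (so $\mathbf{M}_d'$ has the same column weights as $\mathbf{M}_d$), but whose columns are cyclic shifts of those of $\mathbf{M}_d$ (specifically, the $j$-th column of $\mathbf{M}_d'$ equals the $(j-1)$-th column of $\mathbf{M}_d$, indices mod $N$). Matching column weights on both sides gives $w_j = w_{j-1}$ for every $j$, hence all $w_j$ are equal.

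The main obstacle, if any, is really already packaged into Proposition \ref{cyclic_code}: one must be careful that the ``cyclic'' claim there is the set-level closure $\pi(\mathbb{D}_N^{(i)})\subseteq \mathbb{D}_N^{(i)}$, since $\mathbb{D}_N^{(i)}$ is a coset rather than a linear code and the classical ``cyclic linear code'' language does not directly apply. Once this closure is invoked, the argument above needs nothing more than the weight-invariance of $\pi$ and a permutation count. As a consistency check, the total weight of $\mathbf{M}_d$ is $d\cdot A_N^{(i)}(d)$, so each column weight must equal $d\,A_N^{(i)}(d)/N$; this explicit formula will be useful when the homogeneity property is plugged into the upcoming conditional bit-error bound.
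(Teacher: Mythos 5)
Your proof is correct and rests on the same key ingredient as the paper's, namely the cyclic closure of $\mathbb{D}_N^{(i)}$ from Proposition~\ref{cyclic_code}: the paper partitions the weight-$d$ codewords into cyclic orbits and sums orbit-wise column weights, while you apply the shift to all of $\mathbf{M}_d$ at once and read off $w_j=w_{j-1}$. Your global-permutation phrasing is a mild streamlining (and your remark that ``cyclic'' here means set-level closure of a coset is a fair point of care), but the argument is essentially the paper's.
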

\begin{proof}
Consider a codeword with Hamming weight $d$ in $\mathbb D_N^{\left( i \right)}$, let ${{\mathbf{M}'}_d}$ be the $m'\times N$ matrix consisting of it and the other $\left( m'-1 \right)$ codewords obtained by its cyclic shift. Apparently, ${{\mathbf{M}'}_d}$ is a submatrix of $\mathbf{M}_d$ with the Hamming weight of each column equal to $\frac{d}{N}m'$. By considering all the codewords with output weight $d$, Proposition \ref{homogeneous} follows immediately with each column's Hamming weight of $\mathbf{M}_d$ equal to $\frac{d}{N}A_{N}^{\left( i \right)}\left( d \right)$. In addition, one can also prove with a similar approach that the subcode $\mathbb C_N^{\left( i \right)}$ is also homogeneous.
\end{proof}

\begin{remark}
Let $\mathbf{M}_{d}^{*}$ be the matrix obtained from $\mathbf{M}_{d}$ by selecting the column with index in a set $\mathcal{Q} \subset \left\{ 1,2,\ldots ,N \right\}$. The dimension of $\mathbf{M}_{d}^{*}$ is hence $A_{N}^{\left( i \right)}\left( d \right) \times \left| \mathcal{Q} \right|$ and the number of nonzero entries is $\sum\nolimits_j^{}{w_j}$, where $w_j$ is the Hamming weight of the $j$-th row. Since $\mathbb D_N^{\left( i \right)}$ is homogeneous, we have $\sum\nolimits_j^{}{\frac{w_j}{\left| \mathcal{Q} \right|}}=\frac{d}{N}A_{N}^{\left( i \right)}\left( d \right)$ that both denote the Hamming weight of the column in $\mathbf{M}_{d}$ or $\mathbf{M}_{d}^{*}$.
\end{remark}

For systematic polar codes, the information bits are explicitly invisible in the codeword. Given the parameter $\left( N,K,\mathcal{A} \right)$, the conditional bit error probability is hence denoted by
\begin{equation}\label{SPC_BER_bit-channel}
    P_{b,sys}^{\mathcal{A}}\left( W_{N}^{\left( i \right)} \right) =\frac{1}{K}E\left[ d_H(\mathbf{x}_{\mathcal{A}},\mathbf{\hat{x}}_{\mathcal{A}})|\mathcal{E} _i \right] P_e\left( W_{N}^{\left( i \right)} \right),
\end{equation}
where $\hat{\mathbf{x}}_{\mathcal{A}}$ is the wrong estimate of information bits $\mathbf{x}_{\mathcal{A}}$. Based on the polar subcode and its homogeneous property, we can further bound $P_{b,sys}^{\mathcal{A}}\left( W_{N}^{\left( i \right)} \right)$ by the following Proposition.

\begin{proposition}\label{Prop_BER_Bound_BitChannel}
For symmetric channels, the conditional bit error probability ${P_{b,sys}\left( W_{N}^{\left( i \right)} \right)}$ of systematic polar codes can be upper bounded as follows
\begin{align}
    P_{b,sys}\left( W_{N}^{\left( i \right)} \right)
    & \leqslant \sum_d{\sum_w{\frac{w}{N-i+1}A_{N}^{\left( i \right)}\left( w,d \right) P_{N}^{\left( i \right)}\left( d \right)}} \label{Bit_Error_Probability_BitChannel_d_form1}\\
    & =\sum_d{\frac{d}{N}A_{N}^{\left( i \right)}\left( d \right) P_{N}^{\left( i \right)}\left( d \right)} \label{Bit_Error_Probability_BitChannel_d_form2}
\end{align}
\end{proposition}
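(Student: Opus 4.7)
The plan is to apply a weighted union-bound argument to the bit-error count under the event $\mathcal{E}_i$, and then to collapse the resulting expression using the homogeneous property of the polar subcode $\mathbb{D}_N^{(i)}$ (Proposition \ref{homogeneous}).

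I would start from the definition \eqref{SPC_BER_bit-channel}. By the symmetry of the channel, assume without loss of generality that the all-zero codeword is transmitted; under $\mathcal{E}_i$ the pairwise-comparison viewpoint already used for the block-error bound \eqref{Error_Probability_BitChannel_d} associates the decoder output $\hat{\mathbf{x}}$ with some nonzero codeword $\mathbf{c} \in \mathbb{D}_N^{(i)}$, and because systematic coding places the information bits at the codeword positions indexed by $\mathcal{A}$, one has $d_H(\mathbf{x}_\mathcal{A},\hat{\mathbf{x}}_\mathcal{A}) = wt(\mathbf{c}_\mathcal{A})$. The weighted analog of the union bound that underlies \eqref{Error_Probability_BitChannel_d} then yields
\[
E\left[d_H(\mathbf{x}_\mathcal{A}, \hat{\mathbf{x}}_\mathcal{A}) | \mathcal{E}_i\right] P_e(W_N^{(i)}) \leq \sum_d P_N^{(i)}(d) \sum_{\mathbf{c} \in \mathbb{D}_N^{(i)},\, wt(\mathbf{c}) = d} wt(\mathbf{c}_\mathcal{A}).
\]

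Next, I would invoke Proposition \ref{homogeneous}: every column of the weight-$d$ submatrix $\mathbf{M}_d$ has Hamming weight $\frac{d}{N}A_N^{(i)}(d)$. Thus the inner sum, which counts the $1$'s appearing in the $|\mathcal{A}|=K$ columns of $\mathbf{M}_d$ indexed by $\mathcal{A}$, equals $K \cdot \frac{d}{N} A_N^{(i)}(d)$; dividing by $K$ produces the compact bound \eqref{Bit_Error_Probability_BitChannel_d_form2}. The equivalent form \eqref{Bit_Error_Probability_BitChannel_d_form1} follows by replaying the same column-counting trick on the index set $\{i,i+1,\ldots,N\}$: the total number of $1$'s in those $N-i+1$ columns equals $(N-i+1)\frac{d}{N}A_N^{(i)}(d)$ by homogeneity, and it also equals $\sum_w w\, A_N^{(i)}(w,d)$ by the systematic IOWEF convention $w = wt(\mathbf{c}_i^N)$ from \eqref{def_polar_IOWEF_SPC}. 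Matching these two expressions rewrites the bound in the $w$-indexed form.

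The main obstacle will be justifying the weighted union-bound step above, namely that the joint probability of $\mathcal{E}_i$ occurring together with the error pattern being a given $\mathbf{c} \in \mathbb{D}_N^{(i)}$ is upper bounded by the pairwise error probability $P_N^{(i)}(wt(\mathbf{c}))$ between the all-zero codeword and $\mathbf{c}$. This is the bit-weighted analog of the block-error union bound underlying \eqref{Error_Probability_BitChannel_d} and follows from the same pairwise-ML argument, but now carrying the per-codeword factor $wt(\mathbf{c}_\mathcal{A})$ instead of a uniform weight of $1$. Once the weighted bound is in place, the homogeneous property does the rest of the work and, notably, removes any dependence on the specific choice of information set $\mathcal{A}$ from the final expression.
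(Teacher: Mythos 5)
Your proposal is correct and follows essentially the same route as the paper's proof: conditioning on the all-zero codeword by symmetry, bounding the event $\mathcal{E}_i$ by a union of pairwise error events against the codewords of $\mathbb{D}_N^{(i)}$ with the per-codeword weight $wt(\mathbf{c}_{\mathcal{A}})$ attached, and then using the homogeneous property (Proposition \ref{homogeneous}) to count the ones in the $\mathcal{A}$-indexed columns of $\mathbf{M}_d$ (giving \eqref{Bit_Error_Probability_BitChannel_d_form2}) and in the last $N-i+1$ columns (giving \eqref{Bit_Error_Probability_BitChannel_d_form1}). The only cosmetic difference is the order in which the two forms are derived, and the paper is slightly more explicit that the pairwise-event reduction passes through the $\max$-approximation of the bit-channel likelihoods, which you implicitly inherit from the block-error bound \eqref{Error_Probability_BitChannel_d}.
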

\begin{proof}
By \cite[Prop. 13]{arikan2009channel}, if a B-DMC $W$ is symmetric, then the bit-channel $W_{N}^{\left( i \right)}$ is also symmetric. Hence, without loss of generality, we can assume $u_{1}^{N}=0_{1}^{N}$, such that the codeword $\mathbf{x}=u_{1}^{N}\mathbf{G}_N=0_{1}^{N}$ is also an all-zero bit vector. Then, by \eqref{SPC_BER_bit-channel}, we have
\begin{equation}
    \begin{aligned}
    P_{b,sys}^{\mathcal{A}}\left( W_{N}^{\left( i \right)} \right) & =\frac{1}{K}E\left[d_H(0_{1}^{K},\mathbf{\hat{x}}_{\mathcal{A}})|\left\{ \mathcal{E} _i,u_{1}^{N}=0_{1}^{N} \right\} \right] P\left( \mathcal{E} _i|u_{1}^{N}=0_{1}^{N} \right) \\
    & =\frac{1}{K}E\left[ wt( \mathbf{\hat{x}}_{\mathcal{A}} ) |\left\{ \mathcal{E} _i,u_{1}^{N}=0_{1}^{N} \right\} \right] P\left( \mathcal{E} _i|u_{1}^{N}=0_{1}^{N} \right) \\
    & =\frac{1}{K}\sum_{wt\left( \mathbf{\hat{x}}_{\mathcal{A}} \right)}{wt\left( \mathbf{\hat{x}}_{\mathcal{A}} \right) P\left( \left\{ \mathcal{E} _i,u_{1}^{N}=0_{1}^{N} \right\} |wt\left( \mathbf{\hat{x}}_{\mathcal{A}} \right) \right) \frac{P\left( wt\left( \mathbf{\hat{x}}_{\mathcal{A}} \right) \right)}{P\left( u_{1}^{N}=0_{1}^{N} \right)}}\label{deriv_BER_i_tmp1}
    \end{aligned}
\end{equation}

Let $\mathbb{C} _{N}^{\left( i \right)}-\mathbb{D} _{N}^{\left( i \right)}=\left\{ \mathbf{c}^{\left( 0 \right)}:\mathbf{c}^{\left( 0 \right)}=\left( 0_{1}^{i-1},0,u_{i+1}^{N} \right) \mathbf{G}_N,\forall u_{i+1}^{N}\in \mathcal{X}^{N-i} \right\}$. Then, for $\left\{ \mathcal{E} _i,u_{1}^{N}=0_{1}^{N} \right\}$, we have
\begin{equation}
    \begin{aligned}
    \left\{ \mathcal{E} _i,u_{1}^{N}=0_{1}^{N} \right\}
    &=\left\{ \left( 0_{1}^{N},y_{1}^{N} \right) :W_{N}^{\left( i \right)}\left( y_{1}^{N},0_{1}^{i-1}|0 \right) \leqslant W_{N}^{\left( i \right)}\left( y_{1}^{N},0_{1}^{i-1}|1 \right) \right\} \\
    &=\left\{ \left( 0_{1}^{N},y_{1}^{N} \right) :\sum_{\mathbf{c}^{\left( 0 \right)}}{W^N\left( y_{1}^{N}|\mathbf{c}^{\left( 0 \right)} \right)}\leqslant \sum_{\mathbf{c}^{\left( 1 \right)}}{W^N\left( y_{1}^{N}|\mathbf{c}^{\left( 1 \right)} \right)} \right\}.
    \end{aligned}
\end{equation}

As shown in \cite{miloslavskaya2014sequential,trifonov2021}, the above probabilities can be well approximated by $\sum\nolimits_{\mathbf{c}}^{}{W^N\left( y_{1}^{N}|\mathbf{c} \right)}\approx \max _{\mathbf{c}}W^N\left( y_{1}^{N}|\mathbf{c} \right) $, such that
\begin{equation}
    \begin{aligned}
    \left\{ \mathcal{E} _i,u_{1}^{N}=0_{1}^{N} \right\}
    & \approx \left\{ \left( 0_{1}^{N},y_{1}^{N} \right) :\underset{\mathbf{c}^{\left( 0 \right)}}{\max}\,\,W^N\left( y_{1}^{N}|\mathbf{c}^{\left( 0 \right)} \right) \leqslant \underset{\mathbf{c}^{\left( 1 \right)}}{\max}\,\,W^N\left( y_{1}^{N}|\mathbf{c}^{\left( 1 \right)} \right) \right\} \\
    & \subset \left\{ \left( 0_{1}^{N},y_{1}^{N} \right) :W^N\left( y_{1}^{N}|0_{1}^{N} \right) \leqslant \underset{\mathbf{c}^{\left( 1 \right)}}{\max}\,\,W^N\left( y_{1}^{N}|\mathbf{c}^{\left( 1 \right)} \right) \right\} \\
    & \subset \bigcup_{\mathbf{c}^{\left( 1 \right)}}{\left\{ \left( 0_{1}^{N},y_{1}^{N} \right) :W^N\left( y_{1}^{N}|0_{1}^{N} \right) \leqslant W^N\left( y_{1}^{N}|\mathbf{c}^{\left( 1 \right)} \right) \right\}}.
    \end{aligned}
\end{equation}

Let $\left\{ \tilde{\mathcal{E}}_{i,\mathbf{c}^{\left( 1 \right)}},u_{1}^{N}=0_{1}^{N} \right\} =\left\{ \left( 0_{1}^{N},y_{1}^{N} \right) :W^N\left( y_{1}^{N}|0_{1}^{N} \right) \leqslant W^N\left( y_{1}^{N}|\mathbf{c}^{\left( 1 \right)} \right) \right\}$ denote the event that all-zero codeword $0_{1}^{N}$ is transmitted and the decoder chooses $\mathbf{c}^{\left( 1 \right)}$. Then, $\left\{ \mathcal{E} _i,u_{1}^{N}=0_{1}^{N} \right\} \subset \bigcup_{\mathbf{c}^{\left( 1 \right)}}{\left\{ \tilde{\mathcal{E}}_{i,\mathbf{c}^{\left( 1 \right)}},u_{1}^{N}=0_{1}^{N} \right\}}$. Note that this event is derived under the assumption that only two codewords are involved: $\mathbf{c}^{\left( 1 \right)}$ and $0_{1}^{N}$ \cite{ryan2009channel}. And the conditional probability $P\left( \tilde{\mathcal{E}}_{i,\mathbf{c}^{\left( 1 \right)}}|u_{1}^{N}=0_{1}^{N} \right) =P\left( \tilde{\mathcal{E}}_{i,\mathbf{c}^{\left( 1 \right)}},u_{1}^{N}=0_{1}^{N} \right) /P\left( u_{1}^{N}=0_{1}^{N} \right)$ is the pairwise error probability, which denotes the probability that the decoder chooses $\mathbf{c}^{\left( 1 \right)}$ given that $0_{1}^{N}$ was transmitted. Since $P\left( \tilde{\mathcal{E}}_{i,\mathbf{c}^{\left( 1 \right)}}|u_{1}^{N}=0_{1}^{N} \right)$ depends only on the Hamming distance $d$ between $\mathbf{c}^{\left( 1 \right)}$ and $0_{1}^{N}$, we can abbreviate it as $P_{N}^{\left( i \right)}\left( d \right)$.

In view of these, we may rewrite \eqref{deriv_BER_i_tmp1} as
\begin{equation} \label{derivation_SPC_bit-channel}
    \begin{aligned}
    P_{b,sys}^{\mathcal{A}}\left( W_{N}^{\left( i \right)} \right)
    & \overset{\left( \mathrm{a} \right)}{\leqslant} \frac{1}{K}\hspace{-0.1em}\sum_{\mathbf{c}^{\left( 1 \right)}}\hspace{-0.2em}{\sum_{wt(\mathbf{\hat{x}}_{\mathcal{A}})}\hspace{-0.4em}{wt(\mathbf{\hat{x}}_{\mathcal{A}})P\left( \left\{ \tilde{\mathcal{E}}_{i,\mathbf{c}^{\left( 1 \right)}},u_{1}^{N}=0_{1}^{N} \right\} |wt(\mathbf{\hat{x}}_{\mathcal{A}}) \right) \frac{P\left( wt(\mathbf{\hat{x}}_{\mathcal{A}}) \right)}{P\left( u_{1}^{N}=0_{1}^{N} \right)}}} \\
    & \overset{\left( \mathrm{b} \right)}{=}\frac{1}{K}\hspace{-0.1em}\sum_{\mathbf{c}^{\left( 1 \right)}}\hspace{-0.2em}{\sum_{wt(\mathbf{\hat{x}}_{\mathcal{A}})}\hspace{-0.4em}{wt(\mathbf{\hat{x}}_{\mathcal{A}})P\left( wt(\mathbf{\hat{x}}_{\mathcal{A}})|\left\{ \tilde{\mathcal{E}}_{i,\mathbf{c}^{\left( 1 \right)}},u_{1}^{N}=0_{1}^{N} \right\} \right) P\left( \tilde{\mathcal{E}}_{i,\mathbf{c}^{\left( 1 \right)}}|u_{1}^{N}=0_{1}^{N} \right)}} \\
    & \overset{\left( \mathrm{c} \right)}{=}\sum_{\mathbf{c}^{\left( 1 \right)}}{\frac{wt(\mathbf{c}_{\mathcal{A}}^{\left( 1 \right)})}{K}}P\left( \tilde{\mathcal{E}}_{i,\mathbf{c}^{\left( 1 \right)}}|u_{1}^{N}=0_{1}^{N} \right) \\
    & \overset{\left( \mathrm{d} \right)}{=}\sum_d{\sum_w{\frac{w}{N-i+1}A_{N}^{\left( i \right)}\left( w,d \right) P_{N}^{\left( i \right)}\left( d \right)}} \\
    & \overset{\left( \mathrm{e} \right)}{=}\sum_d{\frac{d}{N}A_{N}^{\left( i \right)}\left( d \right) P_{N}^{\left( i \right)}\left( d \right)},
    \end{aligned}
\end{equation}
where inequalities in (a) follows from the union bound. Given any event $\left\{ \tilde{\mathcal{E}}_{i,\mathbf{c}^{\left( 1 \right)}},u_{1}^{N}=0_{1}^{N} \right\}$, the Hamming weight $wt( \hat{\mathbf{x}}_{\mathcal{A}} )$ is determined and equals to $wt( \mathbf{c}_{\mathcal{A}}^{\left( 1 \right)} )$, hence we have (c) from (b). Equalities in (d) and (e) follow from the homogeneous property of polar subcode that $\sum_w{\frac{w}{N-i+1}A_{N}^{\left( i \right)}\left( w,d \right)}=\frac{d}{N}A_{N}^{\left( i \right)}\left( d \right)$ both denote the Hamming weight of the column in $\mathbf{M}_d$. Since (d) and (e) are independent of $\mathcal{A}$, we abbreviate $P_{b,sys}^{\mathcal{A}}\left( W_{N}^{\left( i \right)} \right)$ as $P_{b,sys}\left( W_{N}^{\left( i \right)} \right)$.
\end{proof}

Proposition \ref{Prop_BER_Bound_BitChannel} indicates that the conditional bit error probability of bit-channel $P_{sys,b}\left( W_{N}^{\left( i \right)} \right)$ is still associated with the polar subcode $\mathbb{D} _{N}^{\left( i \right)}$. By \eqref{Bit_Error_Probability_BitChannel_d_form2}, it would suffice to calculate $P_{sys,b}\left( W_{N}^{\left( i \right)} \right)$ with the polar spectrum $\{ {A_N^{\left( i \right)}\left( d \right)} \}$ alone. However, \eqref{Bit_Error_Probability_BitChannel_d_form1} shows the underlying connections that exists between the bit error probability under SC decoding and the polar IOWEF of polar subcode. Meanwhile, the IOWEF and polar IOWEF of each bit-channel would be necessary for analyzing the weight distributions of concatenated polar coding schemes, like CRC-concatenated polar codes. This will  be investigated in the future work.

\subsection{Calculation of IOWEF and Polar IOWEF of Bit-channel for Systematic Polar Codes}

From the definitions of the subcode and polar subcode, we have $\mathbb{C} _{N}^{\left( i \right)}=\mathbb{D} _{N}^{\left( i \right)}\cup \mathbb{C} _{N}^{\left( i+1 \right)}$, but $S_{N}^{\left( i \right)}\left( w,d \right) \ne A_{N}^{\left( i \right)}\left( w,d \right) +S_{N}^{\left( i+1 \right)}\left( w,d \right)$. For systematic polar codes, to obtain the IOWEFs and polar IOWEFs of bit-channels, we introduce two aided weight distributions defined as follows.

\begin{definition}
Let $U_{N}^{\left( i \right)}\left( k,l,m,n \right)$ be the number of codewords with Hamming weight constraint $\left( k,l,m,n \right)$ in $\mathbb{C} _{N}^{\left( i \right)}$, collectively referred to as the 4-split-spectrum $\left\{ U_{N}^{\left( i \right)}\left( k,l,m,n \right) \right\}$, where $1\leqslant i\leqslant N$, and $k=0,1$, $l=0,1,\ldots ,N-i$, $m=0,1$, $n=0,1,\ldots ,i-2$ denote respectively the Hamming weight of the last information bit, the other information bits, the last parity bit and the other parity bits.
\end{definition}

\begin{definition}
Let $V_{N}^{\left( i+1 \right) \rightarrow \left( i \right)}\left( p,q \right)$ be the number of codewords with Hamming weight constraint $\left( p,q \right)$ in $\mathbb{C} _{N}^{\left( i \right)}$, collectively referred to as the 2-split-spectrum $\left\{ V_{N}^{\left( i+1 \right) \rightarrow \left( i \right)}\left( p,q \right) \right\}$, where $1\leqslant i\leqslant N-1$, $p=0,1,\ldots ,N-i+1$ is the sum of the Hamming weights of the last parity bit and all the information bits, and $q=0,1,\dots ,i-1$ is the Hamming weight of the other parity bits.
\end{definition}
\begin{figure}[htb]
  \centering{\includegraphics[scale=0.60]{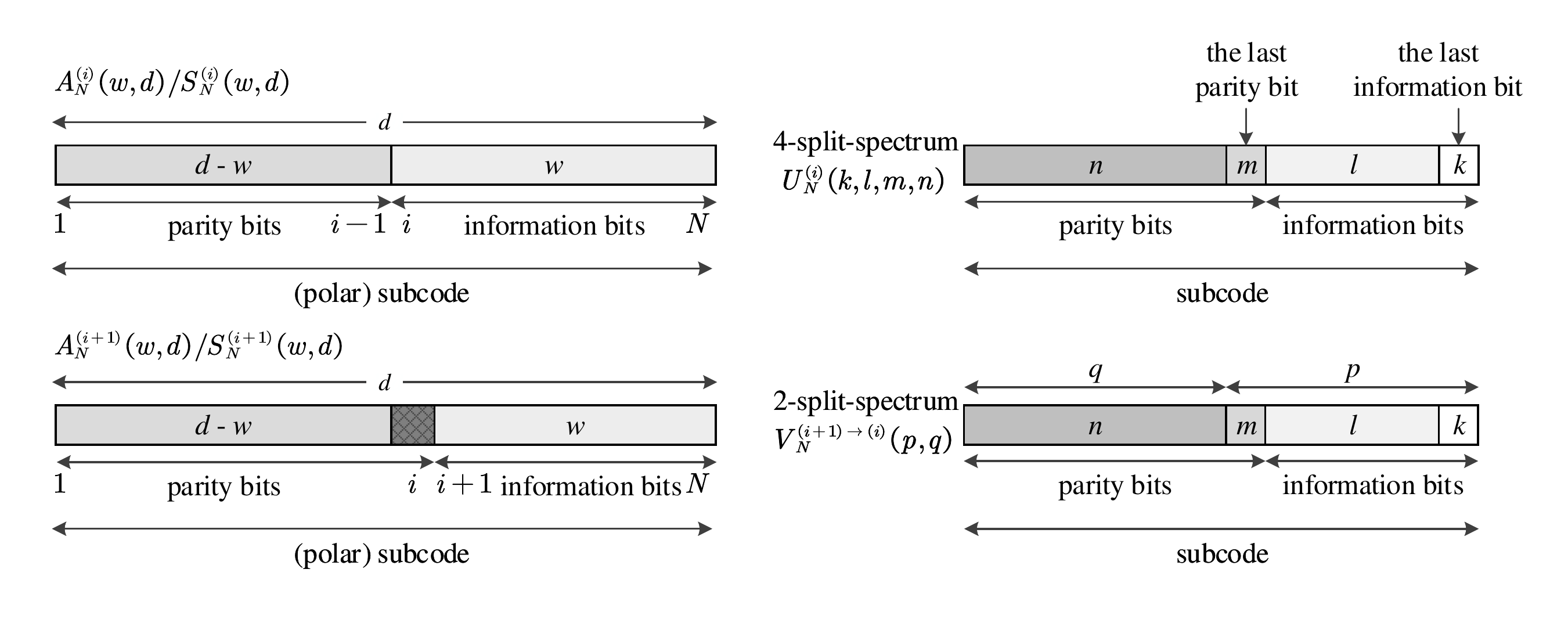}}
  \caption{Relations between the (polar) subcodes of two adjacent bit-channels.}\label{Definitions_of_split_spectrum}
\end{figure}

As shown in Fig. \ref{Definitions_of_split_spectrum}, the 4-split-spectrum is a subdivision of the IOWEF of subcode. From the above definitions, we further have
\begin{equation} \label{Rel_4_split_and_2_split}
    V_{2N}^{\left( j+1 \right) \rightarrow \left( j \right)}\left( p,q \right) =\sum_{k,l,m,n}{U_{2N}^{\left( j+1 \right)}\left( k,l,m,n \right) \mathbbm{1}_{\varOmega}},
\end{equation}
where $\varOmega =\left\{ \left( k,l,m,n \right) : k+l+m=p, n=q \right\}$.

\begin{theorem}
For the subcode $\mathbb{C} _{2N}^{\left( j \right)}$ and its dual code $\mathbb{C} _{2N}^{\left( j \right) \bot}=\mathbb{C} _{2N}^{\left( 2N+2-j \right)}$, where $2\leqslant j\leqslant N$, their 4-split spectra satisfy the following generalized MacWilliams identities
\begin{equation}\label{generalized_MacWilliams}
    U_{2N}^{\left( j \right)}\left( e,f,g,h \right) =\frac{1}{2^{j-1}}\sum_{e'=0}^1{\sum_{g'=0}^1{\sum_{f'=0}^{j-2}{\sum_{h'=0}^{2N-j}{U_{2N}^{\left( 2N+2-j \right)}\left( e',f',g',h' \right) \mathbf{K}_{eg'}^{\left( 1 \right)}}}}\mathbf{K}_{ge'}^{\left( 1 \right)}\mathbf{K}_{fh'}^{\left( 2N-j \right)}\mathbf{K}_{hf'}^{\left( j-2 \right)}},
\end{equation}
where $\mathbf{K}^{\left( M \right)}$ is the $M$-order Krawtchouk matrix with dimension $\left( M+1 \right) \times \left( M+1 \right)$, the entries are given by $\mathbf K_{xy}^{\left(M\right)}=\sum\limits_{z=0}^x (-1)^z \binom{y}{z} \binom{M-y}{x-z}$ with the indices $x$ and $y$ run from $0$ to $M$.
\end{theorem}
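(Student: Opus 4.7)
The plan is to combine Proposition \ref{subcode_duality}, the split-weight MacWilliams identity, and Proposition \ref{cyclic_code} in a single argument. By Proposition \ref{subcode_duality}, the code $\mathbb{C}_{2N}^{(2N+2-j)}$ is the dual of $\mathbb{C}_{2N}^{(j)}$, so a MacWilliams transform is available. Fix the $4$-partition $P = (B_1, B_2, B_3, B_4)$ of $\{1, \ldots, 2N\}$ that matches the definition of $U_{2N}^{(j)}(e,f,g,h)$, namely $B_1 = \{2N\}$, $B_2 = \{j, j+1, \ldots, 2N-1\}$, $B_3 = \{j-1\}$, $B_4 = \{1, 2, \ldots, j-2\}$, with block sizes $(1, 2N-j, 1, j-2)$.

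Next, I would establish the split-weight MacWilliams identity for this partition via the standard character-sum argument. Starting from
\begin{equation*}
W_{\mathbb{C}_{2N}^{(j) \bot}}(x_1, y_1, \ldots, x_4, y_4) = \frac{1}{|\mathbb{C}_{2N}^{(j)}|}\, W_{\mathbb{C}_{2N}^{(j)}}(x_1 + y_1, x_1 - y_1, \ldots, x_4 + y_4, x_4 - y_4),
\end{equation*}
I would expand each factor using $(x_s + y_s)^{n_s - w}(x_s - y_s)^{w} = \sum_{k_s = 0}^{n_s} \mathbf{K}^{(n_s)}_{k_s w}\, x_s^{n_s - k_s} y_s^{k_s}$ and extract the coefficient of $x_1^{1-e} y_1^{e}\, x_2^{(2N-j)-f} y_2^{f}\, x_3^{1-g} y_3^{g}\, x_4^{(j-2)-h} y_4^{h}$. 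This yields
\begin{equation*}
U_{2N}^{(j)}(e, f, g, h) = \frac{1}{2^{j-1}} \sum_{w_1, w_2, w_3, w_4} \widetilde{U}_P(w_1, w_2, w_3, w_4)\, \mathbf{K}^{(1)}_{e w_1} \mathbf{K}^{(2N-j)}_{f w_2} \mathbf{K}^{(1)}_{g w_3} \mathbf{K}^{(j-2)}_{h w_4},
\end{equation*}
where $\widetilde{U}_P(w_1, w_2, w_3, w_4)$ counts dual codewords whose weights on the blocks of $P$ are $(w_1, w_2, w_3, w_4)$, and the prefactor reflects $|\mathbb{C}_{2N}^{(2N+2-j)}| = 2^{j-1}$.

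The last step is to convert $\widetilde{U}_P$ into $U_{2N}^{(2N+2-j)}$, which is the dual's split spectrum on its \emph{own} natural partition $P' = (\{2N\}, \{2N+2-j, \ldots, 2N-1\}, \{2N+1-j\}, \{1, \ldots, 2N-j\})$ with block sizes $(1, j-2, 1, 2N-j)$. Here Proposition \ref{cyclic_code} is the crucial ingredient: because $\mathbb{C}_{2N}^{(2N+2-j)}$ is cyclic, its split-weight distribution is invariant under any cyclic rotation of the underlying partition. A direct check shows that the shift $\sigma : i \mapsto i - (j-1) \pmod{2N}$ maps $B_1 \mapsto B_3'$, $B_2 \mapsto B_4'$, $B_3 \mapsto B_1'$, $B_4 \mapsto B_2'$, so $\widetilde{U}_P(w_1, w_2, w_3, w_4) = U_{2N}^{(2N+2-j)}(w_3, w_4, w_1, w_2)$. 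Substituting this and renaming $(w_3, w_4, w_1, w_2) =: (e', f', g', h')$ in the sum produces \eqref{generalized_MacWilliams} exactly, with its characteristic index pairings $\mathbf{K}^{(1)}_{e g'}$, $\mathbf{K}^{(2N-j)}_{f h'}$, $\mathbf{K}^{(1)}_{g e'}$, $\mathbf{K}^{(j-2)}_{h f'}$.

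The main obstacle I anticipate is this reindexing step. A direct split-MacWilliams application delivers an identity indexed by the primal partition $P$, whereas \eqref{generalized_MacWilliams} is indexed by the dual's own partition $P'$ with the non-trivial swap $(e, g') \leftrightarrow (g, e')$ and $(f, h') \leftrightarrow (h, f')$. Recognizing that this swap is exactly the block permutation induced by the cyclic shift $\sigma^{-(j-1)}$, and verifying the modular arithmetic rigorously across all four blocks (including the boundary cases $j = 2$ and $j = N$, where one of the "other" blocks degenerates to size $0$), will be the technical heart of the proof.
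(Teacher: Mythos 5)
Your proof is correct and takes essentially the same route as the paper: Proposition~1 supplies the dual pair, the generalized (split-weight) MacWilliams identity supplies the transform, and the cyclic shift by $j-1$ positions guaranteed by Proposition~2 is what reconciles the two codes' differently-located information/parity partitions --- the paper simply applies that shift to the primal code \emph{before} invoking MacWilliams on a single common partition, whereas you apply MacWilliams first and then shift the dual's split spectrum, which is equivalent. Just make sure the transform is invoked in the direction dual $\to$ primal (i.e., $W_{\mathbb{C}} = \frac{1}{|\mathbb{C}^{\bot}|} W_{\mathbb{C}^{\bot}}(x+y,\,x-y,\dots)$) so that the prefactor is $1/|\mathbb{C}_{2N}^{(2N+2-j)}| = 1/2^{j-1}$, as your final formula already assumes.
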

\begin{proof}
Let $\tilde{\mathbb{C}}_{2N}^{\left( j \right)}$ be the codeword set obtained from $\mathbb{C} _{2N}^{\left( j \right)}$ by cyclic left shifting each codeword $j-1$ bits. Since $\mathbb{C} _{2N}^{\left( j \right)}$ is cyclic by Proposition \ref{cyclic_code}, we have $\tilde{\mathbb{C}}_{2N}^{\left( j \right)}=\mathbb{C} _{2N}^{\left( j \right)}$ and this also implies $\tilde{\mathbb{C}}_{2N}^{\left( j \right) \bot}=\mathbb{C} _{2N}^{\left( 2N+2-j \right)}$. For any codeword $\mathbf{\hat{c}}\in \tilde{\mathbb{C}}_{2N}^{\left( j \right)}$, the entries of the Hamming weight constraint $\left( k,l,m,n \right)$ here denote the Hamming weight of $\mathbf{\hat{c}}_{2N-j+2}^{2N-1}$, $\mathbf{\hat{c}}_{2N}$, $\mathbf{\hat{c}}_{1}^{2N-j}$ and $\mathbf{\hat{c}}_{2N-j+1}$, respectively. Applying the generalized MacWilliams identities \cite{simonis1995macwilliams} with respect to a coordinate partition $\mathcal{T} =\left\{ \left\{ 1,\dots ,2N\hspace{-0.1em}-\hspace{-0.1em}j \right\} ,\left\{ 2N\hspace{-0.1em}-\hspace{-0.1em}j\hspace{-0.1em}+\hspace{-0.1em}1 \right\} ,\left\{ 2N\hspace{-0.1em}-\hspace{-0.1em}j\hspace{-0.1em}+\hspace{-0.1em}2,\dots ,2N\hspace{-0.1em}-\hspace{-0.1em}1 \right\} ,\left\{ 2N \right\} \right\}$, the proof is completed.
\end{proof}

Let $T_{2N}^{\left( j \right)}\left( w,r \right)$ designate the number of codewords with information bits weight $w$ and parity bits weight $r$ in $\mathbb{C} _{2N}^{\left( j \right)}$, such that the codeword weight $d=w+r$ and
\begin{equation} \label{Rel_4_split_and_IRWEF}
    T_{2N}^{\left( j \right)}\left( w,r \right) =\sum_{k,l,m,n}{U_{2N}^{\left( j \right)}\left( k,l,m,n \right) \mathbbm{1}_{\varOmega}},
\end{equation}
where $\varOmega =\left\{ \left( k,l,m,n \right) : k+l=w, m+n=r \right\}$.

\begin{proposition}\label{polar_IOWEF_with_IRWEF_and_2split}
For bit-channels with indices $1\leqslant j\leqslant 2N-1$, $A_{2N}^{\left( j \right)}\left( w,d \right)$ satisfies
\begin{equation}
   A_{2N}^{\left( j \right)}\left( w,d \right) =\sum_{p,q}{\left( T_{2N}^{\left( j \right)}\left( w,d-w \right) -V_{2N}^{\left( j+1 \right) \rightarrow \left( j \right)}\left( p,q \right) \right) \mathbbm{1}_{\varOmega}},
\end{equation}
where $\varOmega=\left\{ \left( p,q \right): p=w, q=d-w \right\}$.
\end{proposition}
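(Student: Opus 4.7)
The plan is to prove this by a direct counting argument, leveraging the disjoint decomposition $\mathbb{C}_{2N}^{(j)} = \mathbb{D}_{2N}^{(j)} \sqcup \mathbb{C}_{2N}^{(j+1)}$, which arises from conditioning on $u_j \in \{0, 1\}$. For any target input weight $w = wt(\mathbf{c}_j^{2N})$ and output weight $d$ (equivalently, parity weight $d - w = wt(\mathbf{c}_1^{j-1})$), every codeword in $\mathbb{C}_{2N}^{(j)}$ meeting this weight profile lies in exactly one of the two pieces. So I would set up the identity
$T_{2N}^{(j)}(w, d-w) = A_{2N}^{(j)}(w, d) + (\text{count of codewords in } \mathbb{C}_{2N}^{(j+1)} \text{ with } wt(\mathbf{c}_j^{2N})=w,\ wt(\mathbf{c}_1^{j-1})=d-w)$
and rearrange to expose $A_{2N}^{(j)}(w, d)$, provided the second term on the right is identified with $V_{2N}^{(j+1)\rightarrow(j)}(w, d-w)$.

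First, I would unwind the definition of $T_{2N}^{(j)}$ via equation (16) and confirm that it counts exactly the codewords in $\mathbb{C}_{2N}^{(j)}$ with information-bit weight $w$ and parity-bit weight $d-w$ (where the information/parity split is the one native to $\mathbb{C}_{2N}^{(j)}$, namely positions $\{j,\ldots,2N\}$ vs.\ $\{1,\ldots,j-1\}$). The total codeword weight in that class is of course $d$, so this piece is immediate.

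The pivotal step, and the one requiring care, is justifying that the codewords of $\mathbb{C}_{2N}^{(j+1)}$ satisfying the same weight profile are enumerated by $V_{2N}^{(j+1)\rightarrow(j)}(w, d-w)$. The subtlety is a re-labelling: position $j$ is the \emph{last parity bit} of $\mathbb{C}_{2N}^{(j+1)}$ (its information set is $\{j+1,\ldots,2N\}$), whereas for $\mathbb{C}_{2N}^{(j)}$ this same position is the \emph{first information bit}. The 4-split-spectrum $U_{2N}^{(j+1)}(k,l,m,n)$ was designed precisely to isolate the contribution of position $j$ (through the index $m$), and equation (15) folds these four indices into the two indices $p = k+l+m$ and $q = n$. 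Thus $p$ aggregates the weights at positions $j, j+1, \ldots, 2N$, i.e., the $\mathbb{C}_{2N}^{(j)}$-information weight, and $q$ carries the weight at positions $1, \ldots, j-1$, i.e., the $\mathbb{C}_{2N}^{(j)}$-parity weight. Evaluating at $(p, q) = (w, d-w)$ yields precisely the count we need.

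Combining these two ingredients, subtracting, and then re-packaging the single remaining term inside a trivial sum over $(p, q)$ with the indicator $\mathbbm{1}_{\varOmega}$ (so as to mirror the notational style of equations (15) and (17)) gives the claimed identity. I expect the main obstacle to be purely bookkeeping: keeping the two distinct information/parity conventions straight across the $j$- and $(j+1)$-indexed subcodes, and spelling out unambiguously why the re-indexing $k+l+m \mapsto p,\ n \mapsto q$ realigns $\mathbb{C}_{2N}^{(j+1)}$-codewords with the $\mathbb{C}_{2N}^{(j)}$-split. Once that correspondence is made precise, the identity follows immediately from the set-theoretic decomposition.
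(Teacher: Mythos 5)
Your proposal is correct and follows the same route as the paper's (very terse) proof: the disjoint decomposition $\mathbb{C}_{2N}^{(j)}=\mathbb{D}_{2N}^{(j)}\cup\mathbb{C}_{2N}^{(j+1)}$ combined with the observation that the 2-split-spectrum index $p=k+l+m$ aggregates exactly the positions $\{j,\ldots,2N\}$, i.e., the information set of $\mathbb{C}_{2N}^{(j)}$. You have simply made explicit the re-labelling bookkeeping that the paper dismisses as ``straightforward.''
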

\begin{proof}
By the definitions of subcode and polar subcode, we have $\mathbb{C} _{2N}^{\left( j \right)}=\mathbb{D} _{2N}^{\left( j \right)}\cup \mathbb{C} _{2N}^{\left( j+1 \right)}$. Then based on the definition of 2-split-spectrum, the proof is straightforward.
\end{proof}

\begin{lemma}\label{4_split_spectrum_Nplus2_2N}
Given the 4-split-spectra $\left\{ U_{N}^{\left( i \right)}\left( k,l,m,n \right) \right\}$ of $\mathbb{C} _{N}^{\left( i \right)}$, $1 \leq i \leq N$. The $\left\{ U_{2N}^{\left( j \right)}\left( e,f,g,h \right) \right\}$ of $\mathbb{C} _{2N}^{\left( j \right)}$ with indices $N+2\leqslant j\leqslant 2N$ can be calculated by $U_{2N}^{\left( j \right)}\left( e,f,g,h \right) =U_{N}^{\left( j-N \right)}\left( k,l,m,n \right)$, where $e=k$, $f=l$, $g=m$ and $h=k+l+m+2n$.
\end{lemma}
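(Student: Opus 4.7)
The plan is to exploit the Plotkin structure of $\mathbf{G}_{2N}$ to identify $\mathbb{C}_{2N}^{(j)}$ with a ``repetition'' image of $\mathbb{C}_N^{(j-N)}$, and then simply track how the four weight components of the 4-split-spectrum behave under this identification. Because $\mathbf{G}_{2N}=\mathbf{F}_2\otimes \mathbf{G}_N=\bigl(\begin{smallmatrix}\mathbf{G}_N & 0\\ \mathbf{G}_N & \mathbf{G}_N\end{smallmatrix}\bigr)$, writing $u_1^{2N}=(u_1^N,u_{N+1}^{2N})$ gives $x_1^{2N}=\bigl((u_1^N+u_{N+1}^{2N})\mathbf{G}_N,\,u_{N+1}^{2N}\mathbf{G}_N\bigr)$. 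The key observation I would record first is that for $j\geq N+2$ the prefix constraint $u_1^{j-1}=0$ forces $u_1^N=0$ and $u_{N+1}^{j-1}=0$, so every codeword of $\mathbb{C}_{2N}^{(j)}$ takes the special shape $x_1^{2N}=(\mathbf{c},\mathbf{c})$ where $\mathbf{c}=u_{N+1}^{2N}\mathbf{G}_N$ and $u_{N+1}^{j-1}=0$. This exhibits an explicit bijection between $\mathbb{C}_N^{(j-N)}$ and $\mathbb{C}_{2N}^{(j)}$ sending $\mathbf{c}\mapsto(\mathbf{c},\mathbf{c})$.

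Second, I would translate the index partition $\{1,\ldots,2N\}=\{1,\ldots,j-2\}\sqcup\{j-1\}\sqcup\{j,\ldots,2N-1\}\sqcup\{2N\}$ used to define $(e,f,g,h)$ through this bijection. Since $j\geq N+2$, position $2N$ lies in the second half and corresponds to $c_N$, so $e=\mathrm{wt}(c_N)=k$. The block $\{j,\ldots,2N-1\}$ lies entirely in the second half and corresponds to $\{c_{j-N},\ldots,c_{N-1}\}$, so $f=l$. The single position $j-1\geq N+1$ lies in the second half and corresponds to $c_{j-N-1}$, so $g=m$. Finally, $\{1,\ldots,j-2\}$ splits as $\{1,\ldots,N\}\sqcup\{N+1,\ldots,j-2\}$: the first part is the whole first half and contributes $\mathrm{wt}(\mathbf{c})=k+l+m+n$, while the second part corresponds to $\{c_1,\ldots,c_{j-N-2}\}$ and contributes $n$, yielding $h=k+l+m+2n$.

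Third, combining the two steps, each codeword contributing to $U_N^{(j-N)}(k,l,m,n)$ contributes, under $\mathbf{c}\mapsto(\mathbf{c},\mathbf{c})$, to $U_{2N}^{(j)}(k,l,m,k+l+m+2n)$, and no other codewords of $\mathbb{C}_{2N}^{(j)}$ exist in this range of $j$. This is exactly the claimed identity.

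The argument is essentially a structural reduction rather than a calculation, so there is no real technical obstacle; the only care needed is the index bookkeeping in the last step, in particular verifying that the hypothesis $N+2\leq j\leq 2N$ is precisely what is required to place $\{j-1\}$ and $\{j,\ldots,2N-1,2N\}$ wholly inside the second half and to have the first half fall entirely inside the ``other parity bit'' set. I would check the two boundary values $j=N+2$ and $j=2N$ explicitly as a sanity check before committing to the general formula.
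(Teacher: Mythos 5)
Your proposal is correct and follows essentially the same route as the paper: both use the Plotkin structure to show that for $N+2\leqslant j\leqslant 2N$ every codeword of $\mathbb{C}_{2N}^{(j)}$ is a repetition $(\mathbf{t},\mathbf{t})$ of a unique $\mathbf{t}\in\mathbb{C}_N^{(j-N)}$, after which the identity follows from the definition of the 4-split-spectrum. The paper leaves the index bookkeeping as ``immediate,'' whereas you carry it out explicitly (correctly locating $\{2N\}$, $\{j,\ldots,2N-1\}$, $\{j-1\}$ in the second half and splitting $\{1,\ldots,j-2\}$ to get $h=k+l+m+2n$), which is a welcome addition but not a different argument.
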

\begin{proof}
Based on the Plotkin structure $\left[ \mathbf{u}+\mathbf{v}\left| \mathbf{v} \right. \right]$ of polar codes, any codeword in $\mathbb{C} _{2N}^{\left( j \right)}$ is a repetition of a codeword in $\mathbb{C} _{N}^{\left( j-N \right)}$ when $N+2\leqslant j\leqslant 2N$. In other words, $\forall \mathbf{r}\in \mathbb{C} _{2N}^{\left( j \right)}$, there exists a unique $\mathbf{t}\in \mathbb{C} _{N}^{\left( j-N \right)}$ such that $\mathbf{r}=\left( \mathbf{t},\mathbf{t} \right)$. Then, by the definition of 4-split spectrum, the proof is immediate.
\end{proof}

\begin{lemma}\label{4_split_spectrum_Nplus1}
Since $\mathbb{C} _{N}^{\left( 1 \right)}$ is a linear block code with rate-1, any entry in $\left\{ U_{N}^{\left( 1 \right)}\left( k,l,m,n \right) \right\}$ satisfies $m=n=0$. And we have $U_{2N}^{\left( N+1 \right)}\left( e,f,g,h \right) =U_{N}^{\left( 1 \right)}\left( k,l,0,0 \right)$, where $e=g=k$, $f=h=l$.
\end{lemma}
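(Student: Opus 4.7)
The plan is to handle the two assertions separately; both follow directly from structural properties already established in the paper.

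For the first assertion, I would argue straight from the definition: $\mathbb{C}_N^{(1)}$ corresponds to $i=1$, so its systematic information set is $\{1,2,\ldots,N\}$, meaning every codeword coordinate is an information bit and the set of parity positions is empty. The 4-split entry $U_N^{(1)}(k,l,m,n)$ partitions the weight of a codeword as (last info bit, other info bits, last parity bit, other parity bits); since there is nothing to count on the parity side, $m=n=0$ is forced for every codeword.

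For the second assertion, I would invoke the Plotkin $[\mathbf{u}+\mathbf{v}\,|\,\mathbf{v}]$ decomposition of $\mathbf{G}_{2N}=\mathbf{F}_2\otimes\mathbf{G}_N$, the same workhorse used in Lemma~\ref{4_split_spectrum_Nplus2_2N}. For $\mathbb{C}_{2N}^{(N+1)}$ the first $N$ source coordinates are frozen to zero, so the upper Plotkin component $\mathbf{u}=(0_1^N)\mathbf{G}_N$ vanishes and every codeword is a pure repetition $\mathbf{c}=(\mathbf{v},\mathbf{v})$ with $\mathbf{v}=u_{N+1}^{2N}\mathbf{G}_N$. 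As $u_{N+1}^{2N}$ sweeps $\mathcal{X}^N$, $\mathbf{v}$ sweeps $\mathbb{C}_N^{(1)}$ bijectively, giving a weight-preserving one-to-one correspondence between the two code sets.

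The remaining step is a direct bookkeeping check of the 4-split indices. Using the information set $\{N+1,\ldots,2N\}$ and parity set $\{1,\ldots,N\}$ of $\mathbb{C}_{2N}^{(N+1)}$, I would read off $c_{2N}=v_N$ (the last info bit, weight $e$) and $c_N=v_N$ (the last parity bit, weight $g$), while $(c_{N+1},\ldots,c_{2N-1})=(v_1,\ldots,v_{N-1})=(c_1,\ldots,c_{N-1})$ supplies both the "other info bits" aggregate weight $f$ and the "other parity bits" aggregate weight $h$. Matching against $\mathbb{C}_N^{(1)}$, whose last info bit is $v_N$ (weight $k$) and whose remaining info bits are $(v_1,\ldots,v_{N-1})$ (weight $l$), yields $e=g=k$ and $f=h=l$, as claimed. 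I do not expect a genuine obstacle; the only place requiring care is the index matching itself, since position $N$ sits in the parity set of $\mathbb{C}_{2N}^{(N+1)}$ while position $2N$ sits in its information set, yet both carry the same bit $v_N$ under the Plotkin repetition.
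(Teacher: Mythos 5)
Your proof is correct and follows exactly the route the paper intends (and omits as ``straightforward''): the rate-1 observation forces $m=n=0$ by definition of the 4-split spectrum, and the Plotkin repetition $\mathbf{c}=(\mathbf{v},\mathbf{v})$ with $\mathbf{v}\in\mathbb{C}_N^{(1)}$ --- the same mechanism as in Lemma~\ref{4_split_spectrum_Nplus2_2N} --- gives a weight-split-preserving bijection, with the index bookkeeping $e=g=k$, $f=h=l$ checked correctly. No gaps.
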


\begin{lemma}\label{4_split_spectrum_1}
Although $\mathbb{C} _{2N}^{\left( 1 \right)}$ has no dual code, its code rate is 1. Hence, we can obtain its 4-split-spectrum by the symmetry of polar coding, that is, $U_{2N}^{\left( 1 \right)}\left( e,f,0,0 \right) = \binom{2N-1}{f}$.
\end{lemma}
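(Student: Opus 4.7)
The plan is to obtain the 4-split-spectrum of $\mathbb{C}_{2N}^{(1)}$ by direct counting, exploiting that the code is the full ambient space rather than by invoking the generalized MacWilliams identity (which is unavailable here because the dual code is trivial). I would first argue that $\mathbb{C}_{2N}^{(1)}$ is rate-$1$: by Definition \ref{def_WEF_polar_subcode} with $i=1$, every $u_1^{2N}\in\mathcal{X}^{2N}$ is admissible, and since $\mathbf{G}_{2N}=\mathbf{F}_2^{\otimes n+1}$ is invertible over $\mathbb{F}_2$, the map $u_1^{2N}\mapsto u_1^{2N}\mathbf{G}_{2N}$ is a bijection onto $\mathbb{F}_2^{2N}$. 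Hence $\mathbb{C}_{2N}^{(1)}=\mathbb{F}_2^{2N}$.

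Next, I would interpret the 4-split-spectrum at $i=1$. With $\mathcal{A}=\{1,2,\dots,2N\}$, the systematic form has no parity positions at all: the index range $n\in\{0,1,\dots,i-2\}$ is empty, and the block of "parity bits" is the empty string, so $U_{2N}^{(1)}(e,f,g,h)$ is supported only at $g=h=0$. Thus $U_{2N}^{(1)}(e,f,0,0)$ equals the number of codewords in $\mathbb{F}_2^{2N}$ whose designated "last information" coordinate has value $e\in\{0,1\}$ and whose remaining $2N-1$ coordinates have total Hamming weight $f$. Since every binary vector of length $2N$ lies in the code, this count is obtained by fixing one coordinate to $e$ and choosing which $f$ of the other $2N-1$ coordinates are $1$, yielding
\begin{equation}
U_{2N}^{(1)}(e,f,0,0)=\binom{2N-1}{f},
\end{equation}
independently of $e$. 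The $e$-independence is exactly the symmetry of polar coding referred to in the statement: in a rate-$1$ code all coordinates are unconstrained and symmetrically distributed, so fixing the value of one coordinate does not bias the weight enumerator of the remaining coordinates.

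There is no real obstacle in this argument; the only subtlety worth flagging is notational, namely checking that with $i=1$ the definition of the 4-split-spectrum collapses to counting with $g=h=0$ and that no analogue of Lemma \ref{4_split_spectrum_Nplus2_2N} or Lemma \ref{4_split_spectrum_Nplus1} (both of which pull the spectrum back from a smaller half-length code via the Plotkin structure) is needed, because the base case $i=1$ of the code at length $2N$ is directly the entire space. Everything else is elementary enumeration.
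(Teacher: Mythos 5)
Your proof is correct and follows exactly the elementary route the paper has in mind: the paper omits the proof as ``straightforward from the definitions,'' and your argument---$\mathbb{C}_{2N}^{(1)}=\mathbb{F}_2^{2N}$ since $\mathbf{G}_{2N}$ is invertible, the parity block is empty so $g=h=0$ is forced, and fixing the last coordinate to $e$ leaves $\binom{2N-1}{f}$ choices for the remaining weight-$f$ positions---is precisely that straightforward enumeration. Your observation that the count is independent of $e$ correctly captures the ``symmetry'' the lemma alludes to.
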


\vspace{-1.0em}
\begin{algorithm}\label{algorithm_4_split}
\caption{Recursive calculation of polar IOWEFs and 4-split-spectra of bit-channels of systematic polar codes}
\KwIn {The 4-split-spectra $\left \{ U_N^{\left(i\right)}\left(k,l,m,n\right) \right \}$ with codelength of $N$, $1\leqslant i\leqslant N$}
\KwOut {The polar IOWEFs $\left \{ A_{2N}^{\left(j\right)}\left(w,d\right) \right \}$ and the 4-split-spectra $\left \{ U_{2N}^{\left(j\right)}\left(e,f,g,h\right) \right \}$ with codelength of $2N$, $1\leqslant j\leqslant 2N$}

Initialize all elements in $\left \{ A_{2N}^{\left(j\right)}\left(w,d\right) \right \}$ and $\left \{ U_{2N}^{\left(j\right)}\left(e,f,g,h\right) \right \}$ to 0, $1\leqslant j\leqslant 2N$\;

\For{$j=2N \to N+2$}
{
\For{each $U_{N}^{\left( j-N \right)}\left( k,l,m,n \right) \in \left\{ \;U_{N}^{\left( j-N \right)}\left( k,l,m,n \right) \right\}$}
{
Calculate the $\left \{ U_{2N}^{\left(j\right)}\left(e,f,g,h\right) \right \}$ by \emph{Lemma \ref{4_split_spectrum_Nplus2_2N}}\;
}
}

\For{each $U_{N}^{\left( 1 \right)}\left( k,l,0,0 \right) \in \left\{ \;U_{N}^{\left( 1 \right)}\left( k,l,0,0 \right) \right\}$}
{
Calculate the $\left \{ U_{2N}^{\left(N+1\right)}\left(e,f,g,h\right) \right \}$ by \emph{Lemma \ref{4_split_spectrum_Nplus1}}\;
}

\For{$j=N \to 2$}
{
Based on the 4-split-spectrum obtained from previous steps, calculate the $\left \{ U_{2N}^{\left(j\right)}\left(e,f,g,h\right) \right \}$ by solving the generalized MacWilliams Identities in \eqref{generalized_MacWilliams}\;
}
Calculate the $\left \{ U_{2N}^{\left(1\right)}\left(e,f,0,0\right) \right \}$ by \emph{Lemma \ref{4_split_spectrum_1}}\;

\For{$j=1 \to 2N-1$}
{
Calculate $\left\{ T_{2N}^{\left( j \right)}\left( w,r \right) \right\}$ and $\left\{ V_{2N}^{\left( j+1 \right) \rightarrow \left( j \right)}\left( p,q \right) \right\}$ by \eqref{Rel_4_split_and_IRWEF} and \eqref{Rel_4_split_and_2_split},respectively\;
}

\For{$j=1 \to 2N-1$}
{
Calculate $\left\{ A_{2N}^{\left( j \right)}\left( w,d \right) \right\}$ by \emph{Proposition \ref{polar_IOWEF_with_IRWEF_and_2split}}\;
}

Calculate $\left\{ A_{2N}^{\left( 2N \right)}\left( w,d \right) \right\}$ by $\left\{ A_{2N}^{\left( 2N \right)}\left( w,d \right) \right\} =\left\{ A_{2N}^{\left( 2N \right)}\left( 1,2N \right) =1 \right\}$\;
		
\textbf{return} The polar IOWEF $\left \{ A_{2N}^{\left(j\right)}\left(w,d\right) \right \}$ and the 4-split-spectrum $\left \{ U_{2N}^{\left(j\right)}\left(e,f,g,h\right) \right \}$
\end{algorithm}
\vspace{-1.0em}

By the definitions of 4-split spectrum, the proofs of Lemma \ref{4_split_spectrum_Nplus1} and Lemma \ref{4_split_spectrum_1} are straightforward and hence omitted.

Algorithm \ref{algorithm_4_split} illustrates the recursive algorithm for calculating the polar IOWEF and 4-split-spectrum corresponding to bit-channel of systematic polar codes. For bit-channel indices $N+1\leqslant j\leqslant 2N$, the 4-split-spectrum of $\mathbb{C} _{2N}^{\left( j \right)}$ can be calculate by Lemma \ref{4_split_spectrum_Nplus2_2N} and Lemma \ref{4_split_spectrum_Nplus1} with the known 4-split spectra of codelength $N$. Followed by these, the generalized MacWilliams identities \eqref{generalized_MacWilliams} can be solved to obtain the 4-split spectrum of $\mathbb{C} _{2N}^{\left( j \right)}$ with $2\leqslant j\leqslant N$. Using Lemma \ref{4_split_spectrum_1}, we have the 4-split-spectrum of $\mathbb{C} _{2N}^{\left( 1 \right)}$. By now, we obtain the complete 4-split-spectra of codelength $2N$. The rest steps aim at calculating the polar IOWEF by Proposition \ref{polar_IOWEF_with_IRWEF_and_2split}. Since $\mathbb{D} _{2N}^{\left( 1 \right)}$ contains only one codeword with Hamming weight $2N$, it is apparent that $\left\{ A_{2N}^{\left( 2N \right)}\left( w,d \right) \right\} =\left\{ A_{2N}^{\left( 2N \right)}\left( 1,2N \right) =1 \right\}$.

The polar IOWEF $\left\{ A_{N}^{\left( i \right)}\left( w,d \right) \right\}$ with the target codelength can be recursively calculated from $N=8$ or $N=16$ of which the polar IOWEF and 4-split-spectrum can be enumerated easily. Benefit from the symmetry and Plotkin structure, the 4-split-spectrum has the following properties which can reduce the calculation complexity. The proofs are immediate and hence omitted.

\begin{proposition}
The number of codewords with odd output weight in $\mathbb{C} _{N}^{\left( i \right)}$ is zero, where $2\leqslant i\leqslant N$. This indicates $U_{N}^{\left( i \right)}\left( k,l,m,n \right) =0$ if $\left( k+l+m+n \right) \%2=0$.
\end{proposition}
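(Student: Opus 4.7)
The plan is to exploit the well-known fact about the row weights of the polar generator matrix $\mathbf{G}_N = \mathbf{F}_2^{\otimes n}$, namely that the $j$-th row has Hamming weight $2^{\mathrm{wt}(j-1)}$, where $\mathrm{wt}(j-1)$ is the Hamming weight of the binary representation of $j-1$. Under this characterization, only the first row (corresponding to $j=1$, whose index has binary weight $0$) has odd weight, and every other row has weight which is a (nontrivial) power of $2$, hence even.

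First, I would recall from \eqref{def_subcode} that $\mathbb{C}_N^{(i)}$ is precisely the $\mathbb{F}_2$-linear row span of rows $i, i+1, \ldots, N$ of $\mathbf{G}_N$, since the first $i-1$ coordinates of the source vector are fixed to zero and therefore contribute nothing. Next, I would verify the row-weight claim either by direct induction on $n$ using the Plotkin structure $[\mathbf{u}+\mathbf{v} \mid \mathbf{v}]$ (a row of $\mathbf{F}_2^{\otimes n}$ is either a concatenation $(\mathbf{t},\mathbf{t})$ or $(\mathbf{t},\mathbf{0})$ of a row $\mathbf{t}$ of $\mathbf{F}_2^{\otimes (n-1)}$, doubling or preserving its weight), or by invoking the standard Kronecker identity for row weights.

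With these two facts in hand, the conclusion follows by a one-line closure argument: for $i \geq 2$, every generator row used for $\mathbb{C}_N^{(i)}$ has even weight, and the $\mathbb{F}_2$-sum of vectors of even weight is again of even weight (since $\mathrm{wt}(\mathbf{a}+\mathbf{b}) = \mathrm{wt}(\mathbf{a}) + \mathrm{wt}(\mathbf{b}) - 2\,|\mathrm{supp}(\mathbf{a})\cap\mathrm{supp}(\mathbf{b})|$). Hence every codeword of $\mathbb{C}_N^{(i)}$ has even Hamming weight and the number of odd-weight codewords vanishes. Since $k+l+m+n$ is exactly the total output weight of a codeword under the $4$-split-spectrum partition, this immediately yields $U_N^{(i)}(k,l,m,n) = 0$ whenever $k+l+m+n$ is odd, i.e., $(k+l+m+n)\bmod 2 \neq 0$.

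The only nontrivial step is the row-weight fact, which I would present via the inductive Plotkin argument because it is short, self-contained, and directly exploits the recursive structure already invoked elsewhere in the paper (for instance in Lemma \ref{4_split_spectrum_Nplus2_2N}); no other step involves anything beyond elementary parity reasoning.
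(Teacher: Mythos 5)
Your argument is correct and supplies exactly what the paper leaves out (the paper only remarks that the proof is ``immediate'' from the symmetry and Plotkin structure): since $\mathbb{C}_N^{(i)}$ is the row span of rows $i,\dots,N$ of $\mathbf{G}_N$, the row-weight identity $wt\left( \mathbf{g}_{j,N} \right) = 2^{wt(j-1)}$ (which your Plotkin induction establishes) shows every generator of $\mathbb{C}_N^{(i)}$ with $i\geqslant 2$ has even weight, and parity is preserved under $\mathbb{F}_2$-addition, so all codewords have even weight. One point worth flagging: the proposition as printed asserts $U_{N}^{\left( i \right)}\left( k,l,m,n \right) =0$ if $\left( k+l+m+n \right) \%2=0$, which contradicts its own first sentence; the condition should be that $k+l+m+n$ is \emph{odd}, i.e.\ $\left( k+l+m+n \right) \%2=1$, which is what you actually prove, so your silent correction of the paper's typo is the right reading.
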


\begin{proposition}
Since $\forall \mathbf{c} \in \mathbb{C} _{N}^{\left( i \right)}$, $\mathbf{c}\oplus \left( 1,1,\cdots ,1 \right)$ also belongs to $\mathbb{C} _{N}^{\left( i \right)}$, the 4-split-spectrum is symmetric such that $U_{N}^{\left( i \right)}\left( k,l,m,n \right) =U_{N}^{\left( i \right)}\left( 1-k,N-i-l,1-m,i-2-n \right)$.
\end{proposition}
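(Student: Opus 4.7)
The plan is to use the hypothesis (codeword invariance under bit-complementation) to construct an explicit bijection on $\mathbb{C}_{N}^{(i)}$ that swaps 4-split-weight profiles in exactly the claimed way. The hypothesis itself is immediate from the structure of the Arikan kernel: the last row of $\mathbf{G}_{N} = \mathbf{F}_{2}^{\otimes n}$ equals $(1,1,\ldots,1)$, so the all-ones vector $\mathbf{1}$ is produced by setting $u_{N} = 1$ and $u_{j} = 0$ for $j < N$, which trivially satisfies the frozen constraint $u_{1}^{i-1} = \mathbf{0}$ defining $\mathbb{C}_{N}^{(i)}$. Linearity of $\mathbb{C}_{N}^{(i)}$ together with $\mathbf{1} \in \mathbb{C}_{N}^{(i)}$ then gives closure of the subcode under $\mathbf{c} \mapsto \mathbf{c} \oplus \mathbf{1}$.

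First, I would observe that the complementation map $\phi(\mathbf{c}) = \mathbf{c} \oplus \mathbf{1}$ is an involution on $\mathbb{C}_{N}^{(i)}$, and hence a bijection. Second, I would compute the action of $\phi$ on the 4-split weights by partitioning the $N$ coordinates of a codeword into the four disjoint blocks indexed by $(k,l,m,n)$: a single position for the last information bit, the $N-i$ positions for the other information bits, a single position for the last parity bit, and the $i-2$ positions for the other parity bits. Flipping a length-$t$ block of Hamming weight $s$ produces a block of weight $t-s$; applying this block-by-block shows that $\phi$ sends a codeword of 4-split weight $(k,l,m,n)$ to one of weight $(1-k, N-i-l, 1-m, i-2-n)$.

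Finally, since $\phi$ restricts to a bijection between the codewords with 4-split weight $(k,l,m,n)$ and those with weight $(1-k, N-i-l, 1-m, i-2-n)$, the two sets are equinumerous, which is the claimed identity $U_{N}^{(i)}(k,l,m,n) = U_{N}^{(i)}(1-k, N-i-l, 1-m, i-2-n)$. The argument is entirely elementary; the only point requiring a moment's care is the bookkeeping that the four coordinate blocks used to define the 4-split weight each shift under complementation in the correct way, but this is immediate once the block sizes $1$, $N-i$, $1$, and $i-2$ are written down. There is no substantive obstacle beyond this accounting.
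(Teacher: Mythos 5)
Your proof is correct and is exactly the argument the paper has in mind (the paper omits it as ``immediate''): the all-ones word lies in $\mathbb{C}_N^{(i)}$ because the last row of $\mathbf{G}_N$ is all ones and $u_N$ is unconstrained, so complementation is an involution on the subcode, and it flips each of the four coordinate blocks of sizes $1$, $N-i$, $1$, $i-2$ from weight $s$ to $t-s$, giving the stated symmetry of $U_N^{(i)}$. The only (cosmetic) caveat is that the identity should be read for $i\geq 2$, since for $i=1$ the two parity blocks are empty and the index $i-2-n$ degenerates.
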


\section{Bit Error Probability Analysis of Non-systematic Polar Codes}\label{BER_NSPC}

In this section, we move to the analysis of non-systematic polar codes. The IOWEFs of the subcode and polar subcode for non-systematic coding are first introduced. Following this, an upper bound on the conditional bit error probability of bit-channel is derived. Since the accurate calculation of IOWEF is a formidable problem, we also propose an approximation method which is sufficient to calculate the proposed bounds.

\subsection{Conditional Bit Error Probability of Bit Channel for Non-systematic Polar Codes}

Different from the definitions of IOWEFs of the subcode and polar subcode with systematic polar coding, since the information bits are carried by the source word for non-systematic polar codes, the input weight of codeword in $\mathbb{C} _{N}^{\left( i \right)}$ equals the Hamming weight of the information part of source word. In other words, given a codeword $\mathbf{c}=\left( 0_{1}^{i-1},u_{i}^{N} \right) \mathbf{G}_N\in \mathbb{C} _{N}^{\left( i \right)}$, its input weight is $w=wt\left( u_{i}^{N} \right)$ for non-systematic coding. The polar IOWEF of bit-channel holds a similar definition with the input weight of a codeword in $\mathbb{D} _{N}^{\left( i \right)}$ equals to $w=1+wt\left( u_{i+1}^{N} \right)$.

Let $\mathcal{A} _i\subset \mathcal{A}$ denote the subset formed by the elements of $\mathcal{A}$ with values greater than or equal to $i$, and its cardinality $K_i=\left| \mathcal{A} _i \right|\leqslant K$. Furthermore, let $\hat{\mathbf{u}}_{\mathcal{A} _i}$ be the subvector formed by the elements of the wrong decoded information bits $\hat{\mathbf{u}}_{\mathcal{A}}$ with indices in $\mathcal{A} _i$. For each $\mathcal{B} _i$, since the first decoding error occurs at the $i$-th bit-channel and $\hat{u}_{1}^{i-1}=u_{1}^{i-1}$, the corresponding bit error probability is then denoted by $\frac{1}{K}E\left[ d_H(\mathbf{u}_{\mathcal{A} _i},\mathbf{\hat{u}}_{\mathcal{A} _i})|\mathcal{B} _i \right] P_e\left( \mathcal{B} _i \right)$. Following a similar approach in Theorem \ref{BER_SC_polar}, the conditional bit error probability of bit channel for non-systematic polar codes under SC decoding can be represented as
\begin{equation}\label{BER_bit-channel_NSPC}
    P_{b,nsys}^{\mathcal{A}}\left( W_{N}^{\left( i \right)} \right) =\frac{1}{K}E\left[ d_H(\mathbf{u}_{\mathcal{A} _i},\mathbf{\hat{u}}_{\mathcal{A} _i})|\mathcal{E} _i \right] P_e\left( W_{N}^{\left( i \right)} \right).
\end{equation}

However, $d_H(\mathbf{u}_{\mathcal{A} _i},\mathbf{\hat{u}}_{\mathcal{A} _i})$ depends on the specific information set $\mathcal{A}$ which is hence not universal, and the accurate calculation of $d_H(\mathbf{u}_{\mathcal{A} _i},\mathbf{\hat{u}}_{\mathcal{A} _i})$ is a hard and tedious job for even a sufficiently small code dimension $K$. Fortunately, by scaling up the coefficient $\frac{1}{K}E\left[ d_H(\mathbf{u}_{\mathcal{A} _i},\mathbf{\hat{u}}_{\mathcal{A} _i})|\mathcal{E} _i \right]$ and exploiting the concept of polar subcode, we come to the following Proposition.

\begin{proposition}\label{Prop_BER_Bound_BitChannel_NSPC}
For symmetric channels, given the dimension $K$, the conditional bit error probability $P_{b,nsys}\left( W_{N}^{\left( i \right)} \right)$ of a non-systematic polar codes can be bounded by the following inequalities
\begin{equation}\label{BER_bit-channel_IOWEF_NSPC}
    P_{b,nsys}\left( W_{N}^{\left( i \right)} \right) \leqslant \sum_d{\sum_w{\frac{w}{K}A_{N}^{\left( i \right)}\left( w,d \right) P_{N}^{\left( i \right)}\left( d \right)}},
\end{equation}
\begin{equation}\label{appro_BER_bit-channel_IOWEF_NSPC}
    P_{b,nsys}\left( W_{N}^{\left( i \right)} \right) \lessapprox \sum_d{\sum_w{\frac{w}{N-i+1}A_{N}^{\left( i \right)}\left( w,d \right) P_{N}^{\left( i \right)}\left( d \right)}}.
\end{equation}
\end{proposition}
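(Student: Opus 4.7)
The plan is to mirror the proof of Proposition \ref{Prop_BER_Bound_BitChannel}, with the essential change that bit errors are now counted on the source-word indices in $\mathcal{A}_i$ rather than on codeword indices in $\mathcal{A}$. Because $W_N^{(i)}$ inherits the symmetry of $W$, I would assume $u_1^N = 0_1^N$ throughout and work with the decomposition of \eqref{BER_bit-channel_NSPC} in exact analogy with \eqref{deriv_BER_i_tmp1}. The key geometric observation is that, under $u_1^N=0_1^N$, a decoding error committing $\hat u_i=1$ produces an estimated source word of the form $(0_1^{i-1},1,u_{i+1}^N)$, so that $d_H(\mathbf{u}_{\mathcal{A}_i},\hat{\mathbf{u}}_{\mathcal{A}_i})=wt(\hat{\mathbf{u}}_{\mathcal{A}_i})$ simply counts the ones carried by that source word at the indices in $\mathcal{A}_i$.

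Next I would reuse, step by step, the maximum-likelihood approximation and union-bound chain from \eqref{derivation_SPC_bit-channel}. That step replaces $\{\mathcal{E}_i,u_1^N=0_1^N\}$ by a union over the pairwise events $\tilde{\mathcal{E}}_{i,\mathbf{c}^{(1)}}$ indexed by $\mathbf{c}^{(1)} \in \mathbb{D}_N^{(i)}$, each contributing probability $P_N^{(i)}(d)$ with $d=wt(\mathbf{c}^{(1)})$. Grouping contributions by the input--output weight pair $(w,d)$ of $\mathbf{c}^{(1)}$ and inserting $A_N^{(i)}(w,d)$ yields a preliminary expression of the form $\frac{1}{K}\sum_{w,d}\bar n(w,d)\,A_N^{(i)}(w,d)\,P_N^{(i)}(d)$, where $\bar n(w,d)$ is the (average) Hamming weight on $\mathcal{A}_i$ of the input-weight-$w$ source words in $\mathbb{D}_N^{(i)}$.

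For the bound \eqref{BER_bit-channel_IOWEF_NSPC} I would apply the trivial pointwise estimate $\bar n(w,d)\le w$, valid because $\mathcal{A}_i\subseteq\{i,i+1,\ldots,N\}$ and $wt(u_i^N)=w$; this converts the preliminary expression into the claimed $\tfrac{w}{K}$ form. For the approximation \eqref{appro_BER_bit-channel_IOWEF_NSPC} I would instead argue that, treating the positions $\{i,i+1,\ldots,N\}$ as interchangeable for source words of fixed input weight $w$, the average weight carried on $\mathcal{A}_i$ is roughly $\tfrac{|\mathcal{A}_i|}{N-i+1}w$; bounding $|\mathcal{A}_i|\le K$ then yields the $\tfrac{w}{N-i+1}$ factor in front of $A_N^{(i)}(w,d)P_N^{(i)}(d)$.

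The main obstacle is justifying the interchangeability used in the second bound: the \emph{output} coordinates of $\mathbb{D}_N^{(i)}$ are genuinely permutation-symmetric through the cyclicity of Proposition \ref{cyclic_code} (which underlies the homogeneity used in the systematic case), but no comparable symmetry acts on the \emph{input} (source-word) coordinates. This is precisely why \eqref{appro_BER_bit-channel_IOWEF_NSPC} is stated with $\lessapprox$ rather than $\le$; a fully rigorous argument would either invoke a uniformity heuristic over the $\binom{N-i}{w-1}$ free input patterns, in addition to the maximum-likelihood approximation already inherited from the systematic derivation, or explicitly track the deviation between $\bar n(w,d)$ and $\tfrac{|\mathcal{A}_i|}{N-i+1}w$.
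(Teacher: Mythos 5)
Your proposal is correct and follows essentially the same route as the paper's proof: symmetry reduction to the all-zero source word, the union bound over pairwise events $\tilde{\mathcal{E}}_{i,\mathbf{c}^{(1)}}$ indexed by $\mathbb{D}_N^{(i)}$, the identification $wt(\hat{\mathbf{u}}_{\mathcal{A}_i})=wt(\mathbf{u}_{\mathcal{A}_i}^{(1)})$, the trivial estimate $wt(\mathbf{u}_{\mathcal{A}_i}^{(1)})\le w$ for \eqref{BER_bit-channel_IOWEF_NSPC}, and the density heuristic $wt(\mathbf{u}_{\mathcal{A}_i}^{(1)})/K_i\approx w/(N-i+1)$ combined with $K_i\le K$ for \eqref{appro_BER_bit-channel_IOWEF_NSPC}. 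Your observation that the input coordinates lack the cyclic symmetry underpinning the systematic case, so the second bound is only a heuristic $\lessapprox$, accurately reflects the paper's own (unjustified-beyond-heuristic) treatment.
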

\begin{proof}
Considering $\left\{ \mathcal{E} _i,u_{1}^{N}=0_{1}^{N} \right\} \subset \bigcup_{\mathbf{c}^{\left( 1 \right)}}{\left\{ \tilde{\mathcal{E}}_{i,\mathbf{c}^{\left( 1 \right)}},u_{1}^{N}=0_{1}^{N} \right\}}$, for each bit-channel, we can bound \eqref{BER_bit-channel_NSPC} as
\begin{equation}
    \begin{aligned}
     P_{b,nsys}^{\mathcal{A}}\left( W_{N}^{\left( i \right)} \right)
     & \overset{\left( \mathrm{a} \right)}{=}\frac{1}{K}E\left[ d_H(0_{1}^{K_i},\mathbf{\hat{u}}_{\mathcal{A} _i})|\left( \mathcal{E} _i,u_{1}^{N}=0_{1}^{N} \right) \right] P\left( \mathcal{E} _i|u_{1}^{N}=0_{1}^{N} \right)\\
     & \overset{\left( \mathrm{b} \right)}{\leqslant} \frac{1}{K}\hspace{-0.2em}\sum_{\mathbf{c}^{\left( 1 \right)}}{\hspace{-0.5em}}\sum_{wt(\mathbf{\hat{u}}_{\mathcal{A} _i})}{\hspace{-0.6em}}wt(\mathbf{\hat{u}}_{\mathcal{A} _i})P\hspace{-0.2em}\left( wt(\mathbf{\hat{u}}_{\mathcal{A} _i})|\hspace{-0.2em}\left\{ \tilde{\mathcal{E}}_{i,\mathbf{c}^{\left( 1 \right)}},u_{1}^{N}=0_{1}^{N} \right\} \hspace{-0.2em} \right){\hspace{-0.1em}} P\hspace{-0.2em}\left( \tilde{\mathcal{E}}_{i,\mathbf{c}^{\left( 1 \right)}}|u_{1}^{N}=0_{1}^{N} \right) \\
     & \overset{\left( \mathrm{c} \right)}{=}\sum_{\mathbf{c}^{\left( 1 \right)}}{\frac{wt(\mathbf{u}_{\mathcal{A} _i}^{\left( 1 \right)})}{K}}P_{N}^{\left( i \right)}\left( d \right),
    \end{aligned}
\end{equation}
where $\mathbf{c}^{\left( 1 \right)}=\left( 0_{1}^{i-1},1,u_{i+1}^{N} \right) \mathbf{G}_N=\mathbf{u}^{\left( 1 \right)}\mathbf{G}_N$, $\mathbf{u}_{\mathcal{A} _i}^{\left( 1 \right)}$ is the subvector formed by the elements of $\mathbf{u}^{\left( 1 \right)}$ with indices in $\mathcal{A} _i$. Given any event $\left\{ \tilde{\mathcal{E}}_{i,\mathbf{c}^{\left( 1 \right)}},u_{1}^{N}=0_{1}^{N} \right\}$, $wt( \mathbf{\hat{u}}_{\mathcal{A} _i} )$ is determined and equals to $wt(\mathbf{u}_{\mathcal{A} _i}^{\left( 1 \right)})$, which results in (c) from (b). Since $wt( \mathbf{u}_{\mathcal{A} _i}^{\left( 1 \right)} ) \leqslant wt( {\mathbf{u}^{\left( 1 \right)}}_{i}^{N} ) =w$, we can bound (c) as
\begin{equation}\label{BER_bit-channel_IOWEF_NSPC_tmp}
    P_{b,nsys}\left( W_{N}^{\left( i \right)} \right) \leqslant \sum_{\mathbf{c}^{\left( 1 \right)}}{\frac{w}{K}P_{N}^{\left( i \right)}\left( d \right)}.
\end{equation}
Alternatively, by $K_i\leqslant K$, we can also approximate bound (c) as
\begin{equation}\label{appro_BER_bit-channel_IOWEF_NSPC_tmp}
    P_{b,nsys}\left( W_{N}^{\left( i \right)} \right) \leqslant \sum_{\mathbf{c}^{\left( 1 \right)}}{\frac{wt(\mathbf{u}_{\mathcal{A} _i}^{\left( 1 \right)})}{K_i}P_{N}^{\left( i \right)}\left( d \right)}\lessapprox \sum_{\mathbf{c}^{\left( 1 \right)}}{\frac{w}{N-i+1}P_{N}^{\left( i \right)}\left( d \right)}.
\end{equation}

Combining the codewords with the same weight distribution, we complete the proof.
\end{proof}

From Proposition \ref{Prop_BER_Bound_BitChannel_NSPC}, we can see that for non-systematic polar codes, the conditional bit error probability of bit channel of can also be bounded with the aid of polar IOWEF just as systematic polar codes. In the following subsection, we focus on the calculation of polar IOWEFs of bit-channels for non-systematic polar codes.

\subsection{Calculation of IOWEF and Polar IOWEF of Bit-channel for Non-systematic Polar Codes}

Based on the symmetry and Plotkin structure of polar codes, the IOWEFs $\left\{ S_{2N}^{\left( j \right)}\left( w,d \right) \right\}$ and polar IOWEFs $\left\{ A_{2N}^{\left( j \right)}\left( w,d \right) \right\}$ of systematic polar codes with codelength $2N$ can be recursively calculated by $\left\{ S_{N}^{\left( i \right)}\left( w_2,d_2 \right) \right\}$ and $\left\{ A_{N}^{\left( i \right)}\left( w_1,d_1 \right) \right\}$ as well.

\begin{proposition} \label{polar_IOWEF_Nplus1_2N_NSPC}
Given the polar IOWEFs $\left\{ A_{N}^{\left( i \right)}\left( w_1,d_1 \right) \right\}$ of $\mathbb{D} _{N}^{\left( i \right)}$, $1 \leq i \leq N$. The $\left\{ A_{2N}^{\left( j \right)}\left( w,d \right) \right\}$ with indices $N+1\leqslant j\leqslant 2N$ can be calculated by
\begin{equation}
    A_{2N}^{\left( j \right)}\left( w,d \right) =A_{N}^{\left( j-N \right)}\left( w,\frac{d}{2} \right),
\end{equation}
\end{proposition}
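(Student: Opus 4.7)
The plan is to exploit the Plotkin decomposition of the generator matrix. Since $\mathbf{G}_{2N} = \mathbf{F}_2 \otimes \mathbf{F}_2^{\otimes n}$, we have
\begin{equation}
\mathbf{G}_{2N} = \begin{pmatrix} \mathbf{G}_N & \mathbf{0} \\ \mathbf{G}_N & \mathbf{G}_N \end{pmatrix},
\end{equation}
so that for any source word $u_1^{2N}=(u_1^N,u_{N+1}^{2N})$ the codeword is $[\,u_1^N\mathbf{G}_N \oplus u_{N+1}^{2N}\mathbf{G}_N \,\vert\, u_{N+1}^{2N}\mathbf{G}_N\,]$. First I would observe that for any $j$ with $N+1 \le j \le 2N$, the defining condition of $\mathbb{D}_{2N}^{(j)}$ forces $u_1^{j-1}=0$, hence in particular $u_1^N=0$, and $u_j = 1$. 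Consequently the codeword collapses to the repetition form $(\mathbf{t},\mathbf{t})$ where $\mathbf{t} = (0_1^{j-N-1},1,u_{j+1}^{2N})\mathbf{G}_N$.

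Next I would argue that the map $(\mathbf{t},\mathbf{t}) \mapsto \mathbf{t}$ is a bijection from $\mathbb{D}_{2N}^{(j)}$ onto $\mathbb{D}_N^{(j-N)}$: the free coordinates $u_{j+1}^{2N}\in \mathcal{X}^{2N-j}$ at the $2N$-level match exactly the free coordinates of a source vector generating $\mathbb{D}_N^{(j-N)}$ once we identify index $j-N+k$ at the $N$-level with index $j+k$ at the $2N$-level. Under this bijection the output weights satisfy $wt(\mathbf{c}) = 2\,wt(\mathbf{t})$, that is $d = 2d_1$.

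Then I would verify the input weight is preserved under the non-systematic convention introduced in Section V. At the $2N$-level, the input weight of $\mathbf{c}\in \mathbb{D}_{2N}^{(j)}$ is $w = 1 + wt(u_{j+1}^{2N})$; at the $N$-level, the input weight of the corresponding $\mathbf{t}\in \mathbb{D}_N^{(j-N)}$ is $1 + wt(u'^{\,N}_{j-N+1})$, and the Plotkin identification gives $u'^{\,N}_{j-N+k} = u_{j+k}$ for each $k$, so both expressions coincide. Combining the bijection with the weight relations $w_1=w$ and $d_1=d/2$ immediately yields $A_{2N}^{(j)}(w,d) = A_N^{(j-N)}(w,d/2)$.

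The proof is essentially structural and I do not expect a genuine technical obstacle; the only point requiring care is the bookkeeping of the input-weight convention for non-systematic coding between the two levels. Concretely, one must check that what is ``free'' in $u_{j+1}^{2N}$ at level $2N$ aligns indexwise with the free positions of a source word in $\mathbb{D}_N^{(j-N)}$ once $u_1^N=0$ has been imposed, so that Hamming weights transfer verbatim; this is immediate from the Plotkin block form but is the step most easily mis-handled.
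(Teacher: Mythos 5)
Your proposal is correct and follows essentially the same route as the paper: both proofs use the Plotkin block form of $\mathbf{G}_{2N}$ to show that every codeword of $\mathbb{D}_{2N}^{(j)}$ with $j\geqslant N+1$ is a repetition $(\mathbf{t},\mathbf{t})$ of a codeword $\mathbf{t}\in\mathbb{D}_N^{(j-N)}$, and then read off $d=2d_1$ with the input weight unchanged. Your version merely spells out the index bookkeeping for the non-systematic input-weight convention, which the paper leaves implicit.
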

\begin{proof}
For $N+1\leqslant j\leqslant 2N$, based on the Plotkin structure, any codeword $\mathbf{r}$ in $\mathbb{D} _{2N}^{\left( j \right)}$ is a repetition of a codeword $\mathbf{t}$ in $\mathbb{D} _{N}^{\left( j-N \right)}$, i.e., $\mathbf{r}=\left( \mathbf{t},\mathbf{t} \right)$. Given the input weight $w_1$, the number of codewords in $\mathbb{D} _{N}^{\left( j-N \right)}$ with output weight $d_1$ is $A_{N}^{\left( j-N \right)}\left( w_1,d_1 \right)$ which also equals to the number of codewords in $\mathbb{D} _{2N}^{\left( j \right)}$ with output weight $d=2d_1$. The proof is hence completed.
\end{proof}

Also based on the Plotkin structure of polar coding, for bit-channels with indices $1\leqslant j\leqslant N$, any codeword $\mathbf{c}'$ in $\mathbb{D} _{2N}^{\left( j \right)}$ can be represented by $\mathbf{c}'=\left( \mathbf{c}'_1+\mathbf{c}_2\left| \mathbf{c}_2 \right. \right)$, where $\mathbf{c}'_1\in \mathbb{D} _{N}^{\left( j \right)}$ and $\mathbf{c}_2\in \mathbb{C} _{N}^{\left( 1 \right)}$. Although with the knowledge of $\left\{ A_{N}^{\left( j \right)}\left( w_1,d_1 \right) \right\}$ and $\left\{ S_{N}^{\left( 1 \right)}\left( w_2,d_2 \right) \right\}$, the accurate calculation of $\left\{ A_{2N}^{\left( j \right)}\left( w,d \right) \right\}$ of $\mathbb{D} _{2N}^{\left( j \right)}$ still remains a open problem. Inspired by the probabilistic weight distribution \cite{valipour2013probabilistic}, an alternative way is to assume the positions of 1s in $\mathbf{c}'_1$ and $\mathbf{c}_2$ are independent and uniformly distributed. Hence, we resort to the following approximations.

\begin{proposition} \label{polar_IOWEF_1_N_NSPC}
Given the IOWEF $\left\{ S_{N}^{\left( 1 \right)}\left( w_2,d_2 \right) \right\}$ of $\mathbb{C} _{N}^{\left( 1 \right)}$ and polar IOWEFs $\left\{ A_{N}^{\left( i \right)}\left( w_1,d_1 \right) \right\}$ of $\mathbb{D} _{N}^{\left( i \right)}$, $1 \leq i \leq N$. The $\left\{ A_{2N}^{\left( j \right)}\left( w,d \right) \right\}$ with indices $1\leqslant j\leqslant N$ can be approximated by
\begin{equation}
A_{2N}^{\left( j \right)}\left( w,d \right) \approx \sum_{w_1,w_2,d_1,d_2}{\left[ A_{N}^{\left( j \right)}\left( w_1,d_1 \right) S_{N}^{\left( 1 \right)}\left( w_2,d_2 \right) \cdot \sum_{t=\max \left( 0,d_1+d_2-N \right)}^{\min \left( d_1,d_2 \right)}{\frac{\binom{d_2}{t}\binom{N-d_2}{d_1-t}}{\binom{N}{d_1}}} \cdot \mathbbm{1}_{\varOmega} \right]},
\end{equation}
where $\varOmega =\left\{ \left( w_1,w_2,d_1,d_2 \right) :w=w_1+w_2,d=d_1+2d_2 \right\}$.
\end{proposition}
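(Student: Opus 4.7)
The plan is to exploit the Plotkin $[\mathbf{u}+\mathbf{v}\mid\mathbf{v}]$ structure of polar codes to decompose each codeword of $\mathbb{D}_{2N}^{(j)}$ into a pair $(\mathbf{c}'_1,\mathbf{c}_2)\in\mathbb{D}_N^{(j)}\times\mathbb{C}_N^{(1)}$ and then enumerate over such pairs weighted by the joint input/output profile. Writing the source word of length $2N$ with $u_1^{j-1}=0$, $u_j=1$, and $u_{j+1}^{2N}$ free, the block recursion of $\mathbf{G}_{2N}$ immediately yields $\mathbf{c}'_1 = (0_1^{j-1},1,u_{j+1}^N)\mathbf{G}_N \in \mathbb{D}_N^{(j)}$ and $\mathbf{c}_2 = u_{N+1}^{2N}\mathbf{G}_N \in \mathbb{C}_N^{(1)}$, with the full codeword given by $(\mathbf{c}'_1+\mathbf{c}_2 \mid \mathbf{c}_2)$ and the correspondence being a bijection.

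From here I would translate weights. For non-systematic coding the input weight of the codeword in $\mathbb{D}_{2N}^{(j)}$ is $w = wt(u_j^{2N})$, which splits additively into $w_1 = wt(u_j^N)$, the input weight of $\mathbf{c}'_1$ in $\mathbb{D}_N^{(j)}$, and $w_2 = wt(u_{N+1}^{2N})$, the input weight of $\mathbf{c}_2$ in $\mathbb{C}_N^{(1)}$, so that $w = w_1 + w_2$. The output weight satisfies $d = wt(\mathbf{c}'_1+\mathbf{c}_2) + wt(\mathbf{c}_2) = d_1 + 2d_2 - 2t$, where $d_1 = wt(\mathbf{c}'_1)$, $d_2 = wt(\mathbf{c}_2)$, and $t = |\mathrm{supp}(\mathbf{c}'_1)\cap\mathrm{supp}(\mathbf{c}_2)|$ is the coordinate overlap. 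An exact count of pairs with prescribed $(w_1,w_2,d_1,d_2,t)$ would require the joint support distribution of codewords in $\mathbb{D}_N^{(j)}$ and $\mathbb{C}_N^{(1)}$, which is not accessible from the (polar) IOWEFs alone.

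To close the gap I would invoke the probabilistic weight distribution hypothesis of \cite{valipour2013probabilistic}: assume the supports of $\mathbf{c}'_1$ and $\mathbf{c}_2$ are independent and uniformly distributed among the $\binom{N}{d_1}$ and $\binom{N}{d_2}$ subsets of $\{1,\dots,N\}$ of the appropriate sizes. Conditional on $(d_1,d_2)$, the overlap $t$ is then hypergeometric with mass $\binom{d_2}{t}\binom{N-d_2}{d_1-t}/\binom{N}{d_1}$ on $\max(0,d_1+d_2-N)\le t\le\min(d_1,d_2)$. Multiplying the pair count $A_N^{(j)}(w_1,d_1)\,S_N^{(1)}(w_2,d_2)$ by this probability and summing over all admissible $(w_1,w_2,d_1,d_2,t)$ consistent with $w_1+w_2=w$ and $d_1+2d_2-2t=d$ then delivers the claimed approximation.

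The genuine obstacle is the independence/uniformity hypothesis itself: polar codeword supports are highly structured, so the true overlap statistics between $\mathbb{D}_N^{(j)}$ and $\mathbb{C}_N^{(1)}$ need not coincide with the hypergeometric model, and a quantitative error bound appears out of reach. This is precisely why the statement is asserted with ``$\approx$'' rather than equality. Once the probabilistic hypothesis is granted, the remaining ingredients — the Plotkin bijection, the additivity of input weights, and the hypergeometric counting of overlaps — are routine.
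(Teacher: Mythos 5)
Your proposal follows essentially the same route as the paper's own proof: the Plotkin decomposition of a codeword of $\mathbb{D}_{2N}^{(j)}$ into a pair $(\mathbf{c}'_1,\mathbf{c}_2)\in\mathbb{D}_N^{(j)}\times\mathbb{C}_N^{(1)}$, additivity of the non-systematic input weights, and the independence/uniform-support hypothesis of the probabilistic weight distribution, under which the overlap $t$ is hypergeometric with exactly the mass function appearing in the statement. One point deserves emphasis, and it is to your credit: your output-weight bookkeeping $wt\left( \mathbf{c}'_1+\mathbf{c}_2 \mid \mathbf{c}_2 \right) = d_1+2d_2-2t$ is the correct one, whereas the paper's proof asserts $d_1+2d_2-t$ and the printed constraint set $\varOmega$ drops $t$ altogether, fixing $d=d_1+2d_2$. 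As literally printed, the inner sum over $t$ of the hypergeometric probabilities telescopes to $1$ and the overlap factor plays no role, which cannot be the intent; the formula your argument actually delivers is the one in which the indicator enforces $w=w_1+w_2$ and $d=d_1+2d_2-2t$ with $t$ included in the summation variables. So your derivation proves a corrected version of the proposition rather than the statement verbatim; the approach itself, including the honest acknowledgement that the independence hypothesis is the unprovable step that forces the ``$\approx$'', matches the paper.
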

\begin{proof}
With the independency assumption, the Hamming weight $d_1$ of $\mathbf{c}'_1\in \mathbb{D} _{N}^{\left( j \right)}$ yields $\binom{N}{d_1}$ possible codewords with equal probability ${1}/{\binom{N}{d_1}}$. Among these codewords, let $t$ be the number of positions at which the elements in $\mathbf{c}'_1$ and $\mathbf{c} _2$ are both equal to $1$, where $\mathbf{c}_2\in \mathbb{C} _{N}^{\left( 1 \right)}$. The minimum and maximum values of $t$ are $\max \left( 0,d_1+d_2-N \right)$ and $\min \left( d_1,d_2 \right)$, respectively. Given the value $t$, it is easy to check that the Hamming weigh of $\mathbf{c}'=\left( \mathbf{c}'_1+\mathbf{c}_2\left| \mathbf{c}_2 \right. \right)$ is $d_1+2d_2-t$, and this yields total $\binom{d_2}{t}\binom{N-d_2}{d_1-t}$ combinations of $\mathbf{c}'_1$ and $\mathbf{c} _2$. Hence, the probability of $\mathbf{c}'$ with Hamming weight $d_1+2d_2-t$ is ${\binom{d_2}{t}\binom{N-d_2}{d_1-t}}/{\binom{N}{d_1}}$. Finally, the combination of $\mathbf{c}'_1$ and $\mathbf{c} _2$ gives the input weight of $\mathbf{c}'$ as $w_1+w_2$. The proof is completed.
\end{proof}

\begin{proposition} \label{IOWEF_NSPC}
For $1\leqslant j\leqslant 2N-1$, the IOWEF $\left\{ S_{2N}^{\left( j \right)}\left( w,d \right) \right\}$ can be calculated by $S_{2N}^{\left( j \right)}\left( w,d \right) \hspace{-0.2em} =S_{2N}^{\left( j+1 \right)}\left( w,d \right) +A_{2N}^{\left( j \right)}\left( w,d \right)$. Particularly, $\left\{ S_{2N}^{\left( 2N \right)}\left( w,d \right) \right\} =\left\{ S_{2N}^{\left( 2N \right)}\left( 0,0 \right) =1,S_{2N}^{\left( 2N \right)}\left( 1,2N \right) =1 \right\}$.
\end{proposition}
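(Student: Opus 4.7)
The plan is to establish the recursion from the set-theoretic decomposition $\mathbb{C}_N^{(i)} = \mathbb{C}_N^{(i+1)} \sqcup \mathbb{D}_N^{(i)}$ (disjoint union), then separately verify the terminal IOWEF at $j=2N$ by direct enumeration.

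First, I would fix any $i$ with $1 \le i \le 2N-1$ (applied with $N$ replaced by $2N$, and $i$ by $j$) and partition the source word space $\{u_i^N \in \mathcal{X}^{N-i+1}\}$ according to the value of the bit $u_i$. By the definitions \eqref{def_subcode} and \eqref{def_polar_subcode}, the codewords arising from $u_i = 1$ are exactly the elements of $\mathbb{D}_N^{(i)}$, while those arising from $u_i = 0$ are exactly the elements of $\mathbb{C}_N^{(i+1)}$. Since $u_i$ takes a unique value for each source word, these two subsets are disjoint and their union is $\mathbb{C}_N^{(i)}$. Hence the WEF trivially splits, and it remains to argue that the input-weight bookkeeping is consistent in the non-systematic setting.

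Second, I would verify that the input-weight assignments agree under this partition. For non-systematic coding, the input weight of a codeword $\mathbf{c} = (0_1^{i-1}, u_i^N)\mathbf{G}_N \in \mathbb{C}_N^{(i)}$ is $w = wt(u_i^N)$. If $u_i = 0$, then $wt(u_i^N) = wt(u_{i+1}^N)$, which matches the input-weight convention for this codeword when viewed inside $\mathbb{C}_N^{(i+1)}$. If $u_i = 1$, then $wt(u_i^N) = 1 + wt(u_{i+1}^N)$, which matches the input-weight convention for this codeword when viewed inside $\mathbb{D}_N^{(i)}$, as noted in the opening paragraph of Section V. Consequently each codeword contributes to the same $(w,d)$ cell on both sides, yielding $S_{2N}^{(j)}(w,d) = S_{2N}^{(j+1)}(w,d) + A_{2N}^{(j)}(w,d)$.

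Finally, for the base case $j = 2N$, I would enumerate $\mathbb{C}_{2N}^{(2N)}$ directly. It consists of exactly two codewords, corresponding to $u_{2N} \in \{0,1\}$. The choice $u_{2N}=0$ gives the all-zero codeword with $(w,d) = (0,0)$. The choice $u_{2N}=1$ produces the last row of $\mathbf{G}_{2N} = \mathbf{F}_2^{\otimes n}$, which by the Kronecker structure of $\mathbf{F}_2 = \bigl[\begin{smallmatrix}1 & 0\\ 1 & 1\end{smallmatrix}\bigr]$ is the all-ones vector of length $2N$, yielding $(w,d) = (1, 2N)$. This gives $\{S_{2N}^{(2N)}(0,0) = 1,\ S_{2N}^{(2N)}(1,2N) = 1\}$ as claimed.

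The proof is essentially bookkeeping; there is no real obstacle beyond the (easy) step of checking that the non-systematic input-weight convention is preserved under the decomposition of $\mathbb{C}_N^{(i)}$. The only point worth stating explicitly is that the decomposition is a disjoint union, which is immediate from the uniqueness of $u_i$.
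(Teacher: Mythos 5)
Your proof is correct and follows the same route as the paper, which simply invokes the decomposition $\mathbb{C}_{2N}^{(j)}=\mathbb{C}_{2N}^{(j+1)}\cup\mathbb{D}_{2N}^{(j)}$ and calls the rest straightforward. You have merely filled in the (easy) details the paper omits: the disjointness of the union, the consistency of the non-systematic input-weight convention across the split, and the enumeration of the two codewords of $\mathbb{C}_{2N}^{(2N)}$ for the base case.
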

\begin{proof}
The proof is based on the fact that $\mathbb{C} _{2N}^{\left( j \right)}=\mathbb{C} _{2N}^{\left( j+1 \right)}\cup \mathbb{D} _{2N}^{\left( j \right)}$ and is straightforward.
\end{proof}

Based on the properties of polar coding, the approximated polar IOWEFs can be further revised by the following lemma, which is proved in Appendix \ref{proof_of_polar_IOWEF_revise_NSPC}.

\begin{lemma} \label{polar_IOWEF_revise_NSPC}
For each bit-channel $i$, the maximum input weight $w_{\max}=N-i+1$ of $\mathbb{D} _{N}^{\left( i \right)}$ results in the codeword with output weight $d=wt\left( \mathbf{g}_{i,N} \right)$, where $\mathbf{g}_{i,N}$ is the $i$-th row of $\mathbf{G}_N$. That is,
\begin{equation}
    A_{N}^{\left( i \right)}\left( N-i+1,d \right) =
    \begin{cases}
      1, & \text{if } d=wt\left( \mathbf{g}_{i,N} \right)\\
      0, & \mbox{otherwise}.
    \end{cases}.
\end{equation}
\end{lemma}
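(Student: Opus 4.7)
The plan is to exploit the fact that the constraint ``input weight equals $N-i+1$'' pins down the source vector uniquely: any codeword in $\mathbb{D}_N^{(i)}$ comes from a source word of the form $(0_1^{i-1}, 1, u_{i+1}^N)$, and its input weight $1 + wt(u_{i+1}^N)$ attains the maximum $N-i+1$ only when $u_{i+1}^N = 1_{i+1}^N$. Hence there is exactly one codeword of maximum input weight, so the lemma reduces to showing that $\mathbf{c} = (0_1^{i-1}, 1_i^N)\mathbf{G}_N$ has Hamming weight $wt(\mathbf{g}_{i,N})$; the ``otherwise'' part is then automatic.

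I would prove the weight identity $wt\bigl((0_1^{i-1}, 1_i^N)\mathbf{G}_N\bigr) = wt(\mathbf{g}_{i,N})$ by induction on $n$, where $N = 2^n$, using the Plotkin decomposition $\mathbf{G}_{2N} = \left[\begin{smallmatrix}\mathbf{G}_N & \mathbf{0} \\ \mathbf{G}_N & \mathbf{G}_N\end{smallmatrix}\right]$. The base case $n=1$ is a direct computation on $\mathbf{F}_2$. For the inductive step with bit-channel index $j$ at codelength $2N$, I would split on $j$. When $N+1 \le j \le 2N$ the source word $(0_1^{j-1}, 1_j^{2N})$ has a zero first half, so the codeword is a repetition $(\mathbf{v}, \mathbf{v})$ with $\mathbf{v} = (0_1^{j-N-1}, 1_{j-N}^N)\mathbf{G}_N$; the induction hypothesis yields $wt(\mathbf{v}) = wt(\mathbf{g}_{j-N, N})$, and since $\mathbf{g}_{j, 2N} = (\mathbf{g}_{j-N, N}, \mathbf{g}_{j-N, N})$ in this range the total weight matches.

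The case $1 \le j \le N$ is the main obstacle. Here the codeword takes the Plotkin form $(\mathbf{a}+\mathbf{b}, \mathbf{b})$ with $\mathbf{a} = (0_1^{j-1}, 1_j^N)\mathbf{G}_N$ and $\mathbf{b} = 1_1^N \mathbf{G}_N$, and the induction hypothesis controls only $wt(\mathbf{a})$, not $wt(\mathbf{a}+\mathbf{b})$, since the amount of cancellation depends on where the $1$s of $\mathbf{a}$ and $\mathbf{b}$ are located. To handle this I would establish the auxiliary identity $1_1^N \mathbf{G}_N = (0,\ldots,0,1)$ from two elementary facts: (i) the last row $\mathbf{g}_{N,N}$ of $\mathbf{G}_N = \mathbf{F}_2^{\otimes n}$ is the all-ones vector, being the Kronecker power of the last row $(1,1)$ of $\mathbf{F}_2$; and (ii) $\mathbf{G}_N$ is an involution over $\mathrm{GF}(2)$, because $\mathbf{F}_2^2 = \mathbf{I}$ and Kronecker powers commute with squaring. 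Combining (i) and (ii) gives $(0,\ldots,0,1)\mathbf{G}_N = \mathbf{g}_{N,N} = 1_1^N$, and right-multiplying by $\mathbf{G}_N$ yields $\mathbf{b} = (0,\ldots,0,1)$.

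With $\mathbf{b}$ supported only on coordinate $N$, adding $\mathbf{b}$ to $\mathbf{a}$ merely toggles the last entry of $\mathbf{a}$. The lower-triangular form of $\mathbf{G}_N$, together with $(\mathbf{G}_N)_{N,N}=1$ and $(\mathbf{G}_N)_{i,N}=0$ for $i<N$, shows that the $N$-th entry of $\mathbf{a}$ equals $\sum_{k=j}^N (\mathbf{G}_N)_{k,N} = 1$ for every $j \le N$. Hence $wt(\mathbf{a}+\mathbf{b}) = wt(\mathbf{a})-1$, and the codeword weight becomes $\bigl(wt(\mathbf{a})-1\bigr)+wt(\mathbf{b}) = wt(\mathbf{a}) = wt(\mathbf{g}_{j,N}) = wt(\mathbf{g}_{j, 2N})$, where the last equality uses that $\mathbf{g}_{j, 2N} = (\mathbf{g}_{j,N}, \mathbf{0})$ when $j \le N$. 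This closes the induction and proves the lemma.
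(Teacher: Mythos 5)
Your proof is correct and follows essentially the same route as the paper's: induction on the codelength via the Plotkin decomposition, splitting on whether the bit-channel index lies in the upper or lower half, and relying on the key auxiliary fact that $1_1^N\mathbf{G}_N=\left(0_1^{N-1},1\right)$ together with the observation that the last bit of the maximum-input-weight codeword is $1$. The only cosmetic difference is that you derive the auxiliary fact from the involution $\mathbf{G}_N^2=\mathbf{I}$, whereas the paper notes that every column of $\mathbf{G}_N$ except the last has even Hamming weight; both arguments are valid.
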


\begin{remark}
Note that one may also investigate other revise regulations to make $\left\{ A_{2N}^{\left( j \right)}\left( w,d \right) \right\}$ approximate its true values. However, as we will see, the above approximations are sufficient to compute the bit error probability of non-systematic polar codes.
\end{remark}

\begin{algorithm}[h] \label{algorithm_polar_IOWEF_NSPC}
\caption{Recursive calculation of IOWEFs and polar IOWEFs of bit-channels of non-systematic polar codes}
\KwIn {The polar IOWEFs $\left\{ A_{N}^{\left( i \right)}\left( w_1,d_1 \right) \right\}$ and IOWEFs $\left\{ S_{N}^{\left( i \right)}\left( w_2,d_2 \right) \right\}$ of $N$, $1\leqslant i\leqslant N$}
\KwOut {The polar IOWEFs $\left\{ A_{2N}^{\left( j \right)}\left( w,d \right) \right\}$ and IOWEFs $\left\{ S_{2N}^{\left( j \right)}\left( w,d \right) \right\}$ of $2N$, $1\leqslant j\leqslant 2N$}

Initialize all elements in $\left \{ A_{2N}^{\left(j\right)}\left(w,d\right) \right \}$ and $\left\{ S_{2N}^{\left( j \right)}\left( w,d \right) \right\}$ to 0, $1\leqslant j\leqslant 2N$\;

\For{$j=2N \to N+1$}
{
\For{each $A_{N}^{\left( j-N \right)}\left( w_1,d_1 \right) \in \left\{ A_{N}^{\left( j-N \right)}\left( w_1,d_1 \right) \right\}$}
{
Calculate the $\left\{ A_{2N}^{\left( j \right)}\left( w,d \right) \right\}$ by \emph{Proposition \ref{polar_IOWEF_Nplus1_2N_NSPC}}\;
}
}

\For{$j=N \to 1$}
{
Calculate the approximation of $\left\{ A_{2N}^{\left( j \right)}\left( w,d \right) \right\}$ by \emph{Proposition \ref{polar_IOWEF_1_N_NSPC}}\;

Revise $A_{2N}^{\left( j \right)}\left( 2N-j+1,d \right)$ by \emph{Lemma \ref{polar_IOWEF_revise_NSPC}}\;
}

Set $\left\{ S_{2N}^{\left( 2N \right)}\left( w,d \right) \right\} =\left\{ S_{2N}^{\left( 2N \right)}\left( 0,0 \right) =1,S_{2N}^{\left( 2N \right)}\left( 1,2N \right) =1 \right\}$.

\For{$j=2N-1 \to 1$}
{
Calculate the $\left\{ S_{2N}^{\left( j \right)}\left( w,d \right) \right\}$ by \emph{Proposition \ref{IOWEF_NSPC}}\;
}

\textbf{return} The polar IOWEFs $\left\{ A_{2N}^{\left( j \right)}\left( w,d \right) \right\}$ and IOWEFs $\left\{ S_{2N}^{\left( j \right)}\left( w,d \right) \right\}$
\end{algorithm}

The recursive calculation of the approximate IOWEFs and polar IOWEFs of bit-channels are summarized as Algorithm \ref{algorithm_polar_IOWEF_NSPC}. For bit-channel indices $N+1\leqslant j\leqslant 2N$, the polar IOWEF of $\mathbb{D} _{2N}^{\left( j \right)}$ is calculated by Proposition \ref{polar_IOWEF_1_N_NSPC} with the known polar IOWEFs of codelength $N$. Along with the $\left\{ S_{N}^{\left( 1 \right)}\left( w_2,d_2 \right) \right\}$, the polar IOWEFs of bit-channels with indices $1\leqslant j\leqslant N$ can then be approximately calculated by Proposition \ref{polar_IOWEF_1_N_NSPC}, and Lemma \ref{polar_IOWEF_revise_NSPC} can be further adopted to revise the values. Since $\mathbb{C} _{2N}^{\left( j \right)}=\mathbb{C} _{2N}^{\left( j+1 \right)}\cup \mathbb{D} _{2N}^{\left( j \right)}$ and by Proposition \ref{IOWEF_NSPC}, we have the  IOWEFs of $\mathbb{C} _{2N}^{\left( j \right)}$. If the target codelength is smaller than $32$, the polar IOWEFs and IOWEFs can be enumerated easily. While for the other cases, they can then be recursively calculated by the above algorithm.

\section{Construction Methods Based on the Conditional Bit Error Probability}

In this section, based on the upper bound on the conditional bit error probability, we proposed two types of construction metrics which have explicit forms and linear computational complexity.

\subsection{Construction Metrics based on Union-Bhattacharyya Bound}

The conditional bit error probability upper bounds of bit-channels can be regarded as reliability metrics to determine the information sets for data transmission. For the binary-input AWGN (BI-AWGN) channels , the pairwise error probability with Hamming weight $d$ is equal to
\begin{equation}
    P_N^{\left( i \right)}\left( d \right) = Q\left( {\sqrt {\frac{{2d{E_s}}}{{{N_0}}}} } \right),
\end{equation}
where $\frac{E_s}{N_0}$ is the symbol signal-to-noise ratio (SNR) and $Q\left( x \right) = \frac{1}{{\sqrt {2\pi } }}\int_x^\infty  {{e^{ - \frac{{{t^2}}}{2}}}} dt$ is the probability that a random Gaussian variable with zero mean and unit variance exceeds the value $x$. Exploiting the Chernoff bound on the $Q$-function in \eqref{Bit_Error_Probability_BitChannel_d_form1} and \eqref{Bit_Error_Probability_BitChannel_d_form2} (i.e., $Q\left( x \right) \le \exp \left( {{{ - {x^2}} \mathord{\left/{\vphantom {{ - {x^2}} 2}} \right. \kern-\nulldelimiterspace} 2}} \right), x>0$) yields the union-Bhattacharyya bounds that are expressed as
\begin{align}
    P_{b,sys}\left( W_{N}^{\left( i \right)} \right)
    & \leqslant \sum_d{\sum_w{\frac{w}{N-i+1}A_{N}^{\left( i \right)}\left( w,d \right) \exp \left( -\frac{dE_s}{N_0} \right)}} \label{Bit_Error_Probability_BitChannel_UB_IOWEF}\\
    & =\sum_d{\frac{d}{N}A_{N}^{\left( i \right)}\left( d \right) \exp \left( -\frac{dE_s}{N_0} \right)} \label{Bit_Error_Probability_BitChannel_UB_WEF}
\end{align}

Compared with \eqref{Bit_Error_Probability_BitChannel_d_form1} and \eqref{Bit_Error_Probability_BitChannel_d_form2}, the above bounds provide a slightly looser but simpler form to analyze the performance of bit-channels. Note that the approximation of the $Q$-function with an exponential is not sufficiently tight, the union-Bhattacharyya bounds will not be further considered for evaluating BER performance in the following section. To further facilitate the practical implementation, by taking the logarithmic form of the union-Bhattacharyya bound \eqref{Bit_Error_Probability_BitChannel_UB_WEF} and using the max-log approximation of Jacobian logarithm, we have
\begin{equation}
    \ln \left\{ \sum_d{\frac{d}{N}A_{N}^{\left( i \right)}\left( d \right) \exp \left( -\frac{dE_s}{N_0} \right)} \right\} \approx \underset{d}{\max}\left\{ \ln \frac{d}{N}+\ln A_{N}^{\left( i \right)}\left( d \right) -\frac{dE_s}{N_0} \right\}.
\end{equation}

Based on these, we propose a new construction metric named the union-Bhattacharyya bound weight of the bit error probability (UBWB) as follows.

 \noindent \textbf{Metric 1:} The UBWB of the $i$-th bit-channel for systematic polar codes is defined as
\begin{equation}
    \mathrm{UBWB}_{N,sys}^{\left( i \right)} = \underset{d}{\max}\left\{ \ln \frac{d}{N}+\ln A_{N}^{\left( i \right)}\left( d \right) -\frac{dE_s}{N_0} \right\},
\end{equation}
where $i=1,2,\ldots ,N$, $\left\{ A_{N}^{\left( i \right)}\left( d \right) \right\}$ is the polar spectrum defined in \eqref{def_polar_WEF}.

\begin{remark}
Applying a similar approach to \eqref{Bit_Error_Probability_BitChannel_UB_IOWEF}, one can also obtain a similar metric that are denoted as $\mathrm{UBWB}_{N,sys}^{\left( i \right)} = \underset{w,d}{\max}\left\{ \ln \frac{w}{N-i+1}+\ln A_{N}^{\left( i \right)}\left( w,d \right) -\frac{dE_s}{N_0} \right\}$. However, since the polar spectrum is much easier to calculate than the polar IOWEF, the UBWB in the form of $A_{N}^{\left( i \right)}\left( d \right)$ is more preferred in practical implementation.
\end{remark}

Similarly, for non-systematic polar codes, the union-Bhattacharyya bound and approximate union-Bhattacharyya bound on the conditional bit error probability are denoted respectively by
\begin{equation}\label{BER_bit-channel_UB_IOWEF_NSPC}
    P_{b,nsys}\left( W_{N}^{\left( i \right)} \right) \leqslant \sum_d{\sum_w{\frac{w}{K}A_{N}^{\left( i \right)}\left( w,d \right) \exp \left( -\frac{dE_s}{N_0} \right)}},
\end{equation}
\begin{equation}\label{appro_BER_bit-channel_UB_IOWEF_NSPC}
    P_{b,nsys}\left( W_{N}^{\left( i \right)} \right) \lessapprox \sum_d{\sum_w{\frac{w}{N-i+1}A_{N}^{\left( i \right)}\left( w,d \right)\exp \left( -\frac{dE_s}{N_0} \right)}}.
\end{equation}

Note that \eqref{BER_bit-channel_UB_IOWEF_NSPC} depends on the code dimension $K$ which is not convenient for obtaining a universal reliability sequence. Therefore, based on \eqref{appro_BER_bit-channel_UB_IOWEF_NSPC}, we propose the following UBWB metric for non-systematic polar codes as

 \noindent \textbf{Metric 2:} The UBWB of the $i$-th bit-channel for non-systematic polar codes is defined as
\begin{equation}
    \mathrm{UBWB}_{N,nsys}^{\left( i \right)} = \underset{w,d}{\max}\left\{ \ln \frac{w}{N-i+1}+\ln A_{N}^{\left( i \right)}\left( w,d \right) -\frac{dE_s}{N_0} \right\},
\end{equation}
where $i=1,2,\ldots ,N$, $\left\{A_{N}^{\left( i \right)}\left( w,d \right)\right\}$ is the polar IOWEF for non-systematic polar codes.

For other symmetric B-DMCs, by investigating the pairwise error probability $P_{N}^{\left( i \right)}\left( d \right)$, a similar approach can be directly adopted to obtain the corresponding construction metrics.

\subsection{Construction Metrics based on Simplified Union-Bhattacharyya Bound}

The minimum Hamming weight $d_{\min}$ dominates the union-Bhattacharyya bound in high SNR regime. By fixing the output weight $d=d_{\min}$, one can obtain the simplified union-Bhattacharyya bound and we further propose the simplified UBWB (SUBWB) metrics as follows.

 \noindent \textbf{Metric 3:} The SUBWB of the $i$-th bit-channel for systematic polar codes is defined as
\begin{equation}
    \mathrm{SUBWB}_{N,sys}^{\left( i \right)} = \ln \frac{d_{\min}}{N}+\ln A_{N}^{\left( i \right)}\left( d_{\min} \right) -\frac{d_{\min}E_s}{N_0},
\end{equation}
where $i=1,2,\ldots ,N$, $A_{N}^{\left( i \right)}\left( d_{\min} \right)$ is the number of codewords with output weight $d_{\min}$ in $\mathbb{D} _{N}^{\left( i \right)}$.

 \noindent \textbf{Metric 4:} The SUBWB of the $i$-th bit-channel for non-systematic polar codes is defined as
\begin{equation}
    \mathrm{SUBWB}_{N,nsys}^{\left( i \right)} = \underset{w}{\max}\left\{ \ln \frac{w}{N-i+1}+\ln A_{N}^{\left( i \right)}\left( w,d_{\min} \right) -\frac{d_{\min}E_s}{N_0} \right\},
\end{equation}
where $i=1,2,\ldots ,N$ and $A_{N}^{\left( i \right)}\left( w,d_{\min} \right)$ is the number of codewords in $\mathbb{D} _{N}^{\left( i \right)}$ with input weight $w$ and output weight $d_{min}$ of non-systematic coding.

Compared with DE and GA which involve high-dimensional recursive calculations, the UBWB and SUBWB metrics have explicit analytical forms and linear computational complexity since the polar IOWEF and polar spectrum can be precomputed and stored. Besides, one can also obtain a fixed reliability sequence (like polar sequence in \cite{3gpp.38.212}) by selecting a suitable design-SNR.

\section{Numerical Analysis and Simulation Results}

In this section, we first provide numerical analysis of the bit error probability upper bounds described in Section \ref{BER_SPC} and \ref{BER_NSPC}. Then, simulation results based on various construction methods of polar codes are also compared.

\subsection{Numerical Analysis of Bit Error Probability Upper Bounds}

To examine the effectiveness of the bit error probability upper bounds of both systematic and non-systematic polar codes, we present the corresponding upper bounds under SC decoding along with the relevant simulation results over BI-AWGN channels.
\begin{figure}[htp]
  \centering{\includegraphics[scale=0.53]{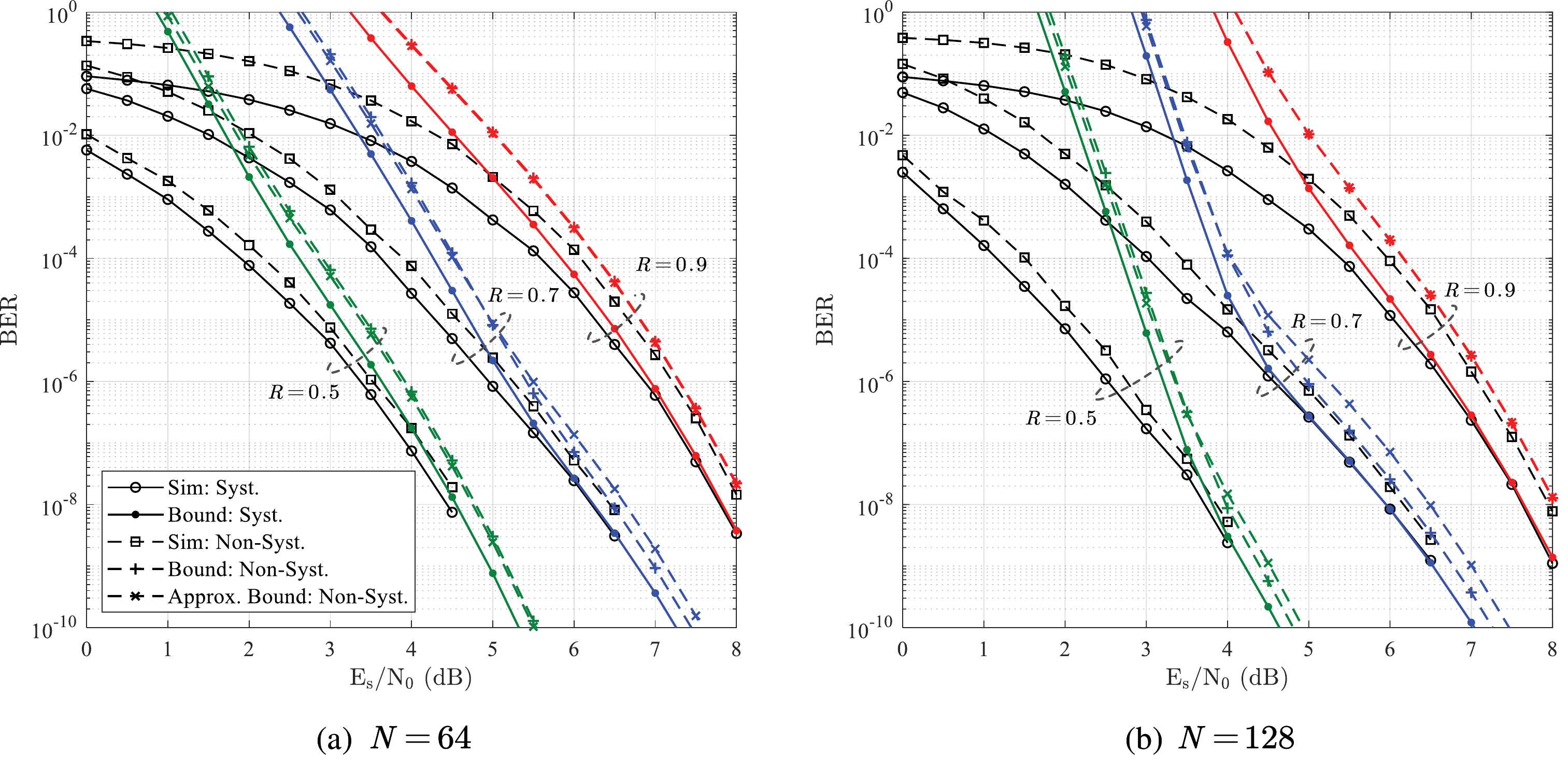}}
  \caption{Bit error probability upper bounds and simulation results of polar codes under SC decoding.}\label{SPC_NSPC_bounds}
\end{figure}

The results for both systematic and non-systematic polar codes are shown in Fig. \ref{SPC_NSPC_bounds}, while considering the codelength of $N=64$ and $N=128$. The information set is determined by the polar sequence in 5G NR with code rate of $R\in \left\{ 0.5,0.7,0.9 \right\}$. Note that here we focus on validating the upper bounds but not on comparing the construction metrics, hence we just use the polar sequence as an example. The upper bounds of systematic and non-systematic polar codes are calculated by \eqref{Bit_Error_Probability_BitChannel_d_form2} and \eqref{BER_bit-channel_IOWEF_NSPC} respectively, and the approximate bounds of non-systematic polar codes are calculated by \eqref{appro_BER_bit-channel_IOWEF_NSPC}. As shown in Fig. \ref{SPC_NSPC_bounds}, systematic polar codes achieve better BER performance than non-systematic polar codes, and the upper bounds can well reflect the tendency of the simulated BER curve at high SNRs. While as the SNR decreases, the upper bounds tend to diverge. This is due to the conceptual weakness of union bounds \cite{sason2006performance} that the intersections of decision regions related to the codewords other than the actual transmitted one are counted more than once. Despite all this, the union bound
serves as a basis for many other bounding techniques. Furthermore, the comparison between the exact bound and the approximate bound of non-systematic polar codes reveals that they are in good agreement, especially for high code rates. This indicates that the approximate upper bound can be an efficient tool to evaluate the reliability order of bit-channels with respect to the bit error probability.

\subsection{Simulation Results}

In this section, the BER simulation performance of polar codes constructed by the proposed UBWB/SUBWB metrics and the conventional methods are compared. The codelength $N$ is set to $256$ with code rate of $R\in \left\{ 1/3,1/2,2/3 \right\}$. The SC and SCL decodings with the list size of $L=32$ are adopted to decode polar codes. For the GA method, polar codes are constructed individually for each evaluated SNR.
\begin{figure}[htb]
  \centering{\includegraphics[scale=0.53]{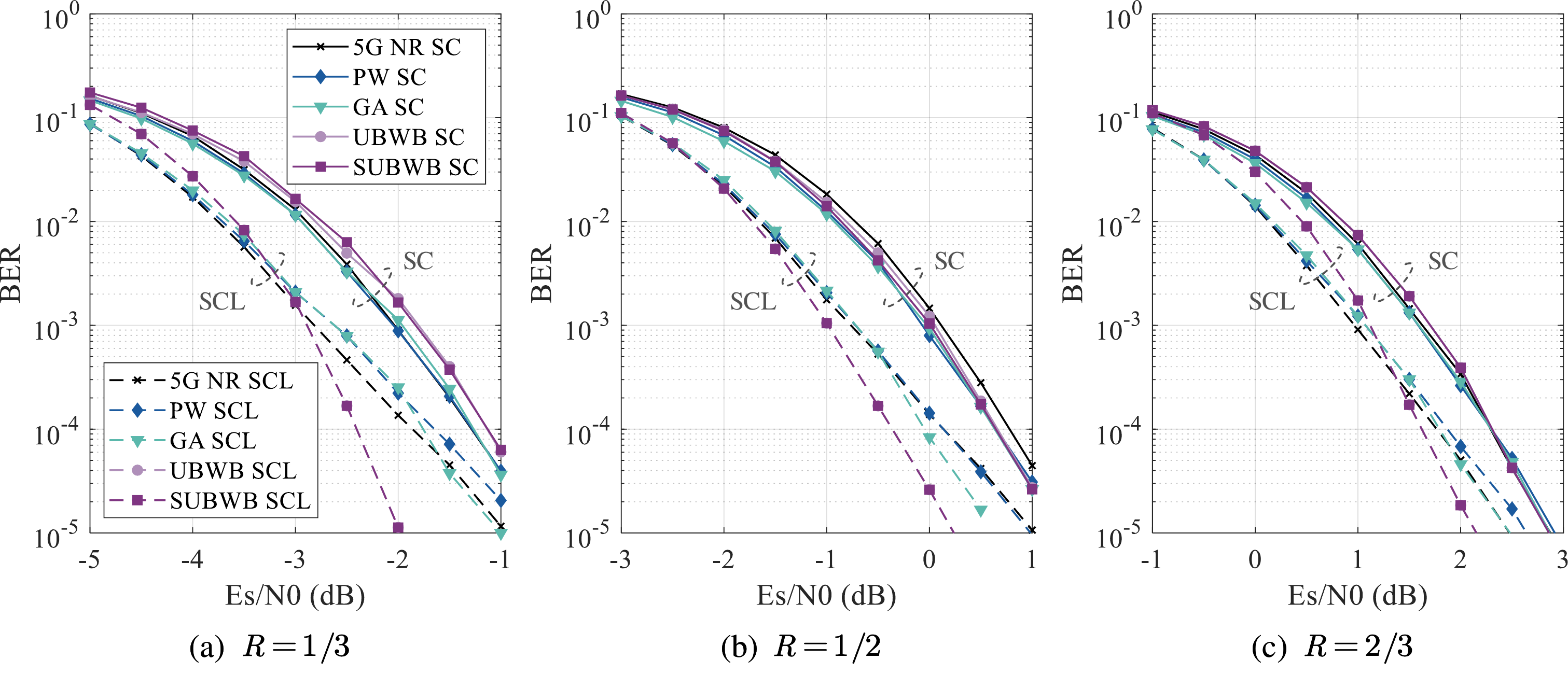}}
  \caption{BER performance comparisons among systematic polar codes constructed by various metrics.}\label{BER_SPC_Comparisons}
\end{figure}

Fig. \ref{BER_SPC_Comparisons} provides a pictorial description of the BER performance of systematic polar codes constructed by various metrics. For SC decoding, the design-SNRs of UBWB/SUBWB for $R=1/3$, $1/2$ and $2/3$ are set to 2.0, 3.5 and 4.5 dB, respectively; while for SCL decoding, are 3.5, 5.5 and 7.0 dB. It is observed that systematic polar codes constructed by UBWB/SUBWB can achieve comparable performance to those constructed by conventional methods under SC decoding, and better performance under SCL decoding. More explicitly, for $R=1/3$ and $\rm{BER}=10^{-4}$, UBWB and SUBWB outperform GA by about 0.64 dB under SCL decoding. This is because UBWB and SUBWB benefit from the adequate utilization of the weight distributions of polar subcodes, which might also result in an improvement in distance spectrum of polar codes. Another kind of comparisons, shown in Fig. \ref{BER_NSPC_Comparisons}, is performed among non-systematic polar codes. The design-SNRs of UBWB/SUBWB for $R=1/3$, $1/2$ and $2/3$ are 1.5, 3.5 and 4.0 dB under SC decoding; 3.5, 5.5 and 7.0 dB under SCL decoding. The results in Fig. \ref{BER_NSPC_Comparisons} are similar to those of systematic polar codes and these also indicate that the approximations of the polar IOWEF and conditional bit error probability upper bounds are sufficient to evaluate the performance of bit-channels.
\begin{figure}[htb]
  \centering{\includegraphics[scale=0.53]{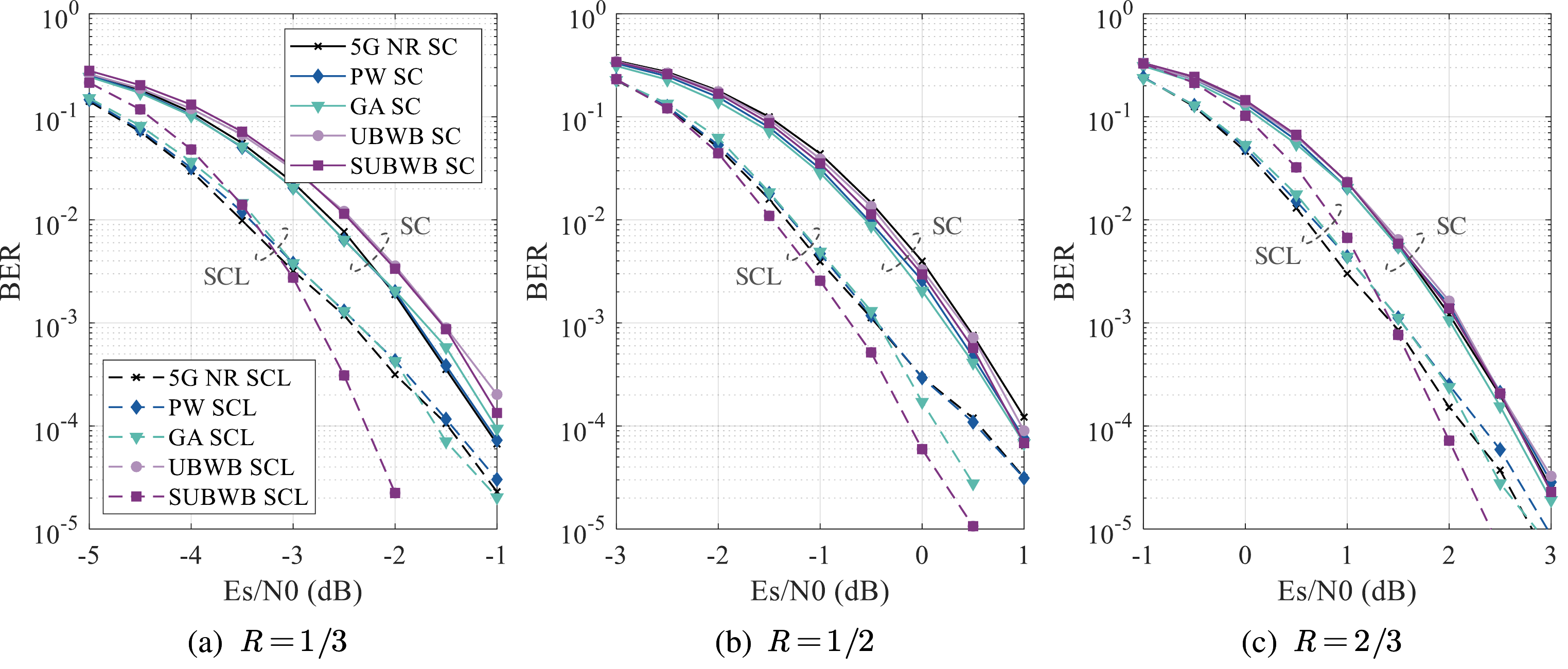}}
  \caption{BER performance comparisons among non-systematic polar codes constructed by various metrics.}\label{BER_NSPC_Comparisons}
\end{figure}

In a nutshell, the proposed UBWB and SUBWB are two reasonable construction metrics that can achieve superior performance than conventional methods under SCL decoding. Moreover, UBWB and SUBWB are more preferred than GA due to the linear computational complexity.

\section{Conclusion}
In this paper, we introduce the conditional bit error probability of bit-channel to analyze the bit error performance of both systematic and non-systematic polar codes. Then, the upper bounds on the conditional bit error probability of bit-channel and the bit error probability of polar codes under SC decoding are also given. Based on these, two construction metrics are proposed, which have linear computational complexity, explicit forms, and satisfying performance. We believe this work would be helpful for analyzing the bit error performance of other polar coded systems.

\appendix

\subsection{Proof of Proposition \ref{cyclic_code}} \label{proof_of_cyclic_code}
We first use mathematical induction to prove the subcode is cyclic. For $N=2$, it can be easily proved by enumerating all possible codewords that $\mathbb C_2^{\left( i \right)}$ is cyclic, where $i=1,2$. Assume the statement is true for $N \ge 2$, that is, $\mathbb C_N^{\left( i \right)}$ is cyclic for any $1 \le i \le N$. Now, we need to prove that the subcode of codelength $2N$ is also a cyclic code.

For subcode $\mathbb C_{2N}^{\left( i \right)}$ with index ${N+1} \le i \le 2N$, based on the Plotkin  structure $\left[ {{\bf{u}} + {\bf{v}}|{\bf{v}}} \right]$ of polar coding, it is easy to prove that any codeword in $\mathbb C_{2N}^{\left( i \right)}$ is a repetition of a unique codeword in $\mathbb C_N^{\left( {i - N} \right)}$. In other words, $\forall {\bf{r}} = \left( {{r_1},{r_2}, \cdots ,{r_{2N}}} \right) \in \mathbb C_{2N}^{\left( i \right)}$ with ${N+1} \le i \le 2N$, it satisfies that ${\bf{r}} = \left( {{\bf{t}},{\bf{t}}} \right)$, where ${\bf{t}} = \left( {{t_1},{t_2}, \cdots ,{t_N}} \right) \in \mathbb C_N^{\left( {i - N} \right)}$, and vice versa. The cyclic shift of $\bf{r}$ is denoted by ${{\bf{r}}^{\left( 1 \right)}} = \left( {{t_N},{t_1}, \cdots ,{t_{N - 1}},{t_N},{t_1}, \cdots ,{t_{N - 1}}} \right)$. Since $\mathbb C_N^{\left( i-N \right)}$ is a cyclic code and ${\bf{t}} \in \mathbb C_N^{\left( {i - N} \right)}$, we have ${{\bf{t}}^{\left( 1 \right)}} = \left( {{t_N},{t_1}, \cdots ,{t_{N - 1}}} \right) \in \mathbb C_N^{\left( {i - N} \right)}$. This implies ${{\bf{r}}^{\left( 1 \right)}} = \left( {{{\bf{t}}^{\left( 1 \right)}},{{\bf{t}}^{\left( 1 \right)}}} \right)$ is still a codeword in $\mathbb C_{2N}^{\left( i \right)}$. Hence we have proved that $\mathbb C_{2N}^{\left( i \right)}$ is cyclic for ${N+1} \le i \le 2N$. Moreover, since the dual code of a cyclic code is also cyclic \cite[Th. 4.2.6]{huffman2010fundamentals} and by Proposition \ref{subcode_duality}, it can be concluded that $\mathbb C_{2N}^{\left( i \right)}$ is cyclic for $2 \le i \le N$. For $\mathbb C_{2N}^{\left( 1 \right)}$, it consists all $2^{2N}$ possible codewords of codelength $2N$, hence $\mathbb C_{2N}^{\left( 1 \right)}$ is cyclic as well. This completes the proof that any subcode is a cyclic code.

To prove the polar subcode is also cyclic, observe that $\mathbb C_N^{\left( i \right)} = \mathbb C_N^{\left( {i + 1} \right)} \cup \mathbb D_N^{\left( i \right)}$ for $1 \le i \le N - 1$. For any codeword ${\bf{s}} \in \mathbb D_N^{\left( i \right)}$ and its cyclic shift ${{\bf{s}}^{\left( 1 \right)}}$, we have ${{\bf{s}}^{\left( 1 \right)}} \in \mathbb C_N^{\left( i \right)}$. Assume ${{\bf{s}}^{\left( 1 \right)}} \in \mathbb C_N^{\left( i+1 \right)}$, since $\mathbb C_N^{\left( i+1 \right)}$ is cyclic, then ${\bf{s}} \in \mathbb C_N^{\left( i+1 \right)}$ which is contradict to the fact that ${\bf{s}} \in \mathbb D_N^{\left( i \right)}$. Hence, ${{\bf{s}}^{\left( 1 \right)}} \in \mathbb D_N^{\left( i+1 \right)}$ and this implies $\mathbb D_N^{\left( i+1 \right)}$ is a cyclic code, where $1 \le i \le N - 1$. For $\mathbb D_N^{\left( 1 \right)}$, it only contains an all-ones codeword and is obviously also cyclic. This proves the second claim that any polar subcode is also cyclic.

\subsection{Proof of Lemma \ref{polar_IOWEF_revise_NSPC}}\label{proof_of_polar_IOWEF_revise_NSPC}
This can also be proved by mathematical induction. First, let $\mathbf{\dot{c}}'_{i,N}=\left( 0_{1}^{i-1},1_{i}^{N} \right) \mathbf{G}_N$ be the codeword generated by the maximum input weight of $\mathbb{D} _{N}^{\left( i \right)}$. Since $\mathbf{G}_N$ is a lower triangular matrix, the last bit in $\mathbf{\dot{c}}'_{i,N}$ is $1$. Particularly, $\mathbf{\dot{c}}'_{1,N}=\left( 0_{1}^{N-1},1 \right)$ since the Hamming weight of each column of $\mathbf{G}_N$ except the last one is even. For the base case. It is easy to check that the Hamming weight of $\mathbf{\dot{c}}'_{i,2}$ equals to $wt\left( \mathbf{g}_{i,2} \right)$, where $i=1,2$. Assume the statement is true for $N \geqslant 2$, we need to prove this still holds for $2N$. For $\mathbb{D} _{2N}^{\left( i \right)}$ with index $N+1\leqslant i\leqslant 2N$, based on the Plotkin structure, we have $\mathbf{\dot{c}}'_{i,2N}=\left( \mathbf{\dot{c}}'_{i-N,N},\mathbf{\dot{c}}'_{i-N,N} \right)$. Since $wt\left( \mathbf{\dot{c}}'_{i-N,N} \right) =wt\left( \mathbf{g}_{i-N,N} \right)$ and $\mathbf{g}_{i,2N}=\left( \mathbf{g}_{i-N,N},\mathbf{g}_{i-N,N} \right)$, we get $wt\left( \mathbf{\dot{c}}'_{i,2N} \right) =wt\left( \mathbf{g}_{i,2N} \right)$. For the case of $1\leqslant i\leqslant N$, we have $\mathbf{\dot{c}}'_{i,2N}=\left( \mathbf{\dot{c}}'_{i,N}+\mathbf{\dot{c}}'_{1,N},\mathbf{\dot{c}}'_{1,N} \right)$. Since the last bit in $\mathbf{\dot{c}}'_{i,N}$ is $1$ and $\mathbf{\dot{c}}'_{1,N}=\left( 0_{1}^{N-1},1 \right)$, the Plotkin structure gives the Hamming weight of $\mathbf{\dot{c}}'_{i,2N}$ as $wt\left( \mathbf{\dot{c}}'_{i,2N} \right) =wt\left( \mathbf{\dot{c}}'_{i,N} \right) =wt\left( \mathbf{g}_{i,N} \right) =wt\left( \mathbf{g}_{i,2N} \right)$. This proves the inductive step and completes the proof.

\ifCLASSOPTIONcaptionsoff
  \newpage
\fi

\bibliographystyle{ieeetr}
\bibliography{reference}

\end{document}